\title{Hitting Geodesic Intervals in Structurally Restricted Graphs}
\author{Tatsuya Gima}
{Hokkaido University, Sapporo, Japan}
{gima@ist.hokudai.ac.jp}
{https://orcid.org/0000-0003-2815-5699}
{JSPS KAKENHI Grant Numbers
JP24K23847, 
JP25K03077. 
}
\author{Yasuaki Kobayashi}
{Hokkaido University, Sapporo, Japan}
{koba@ist.hokudai.ac.jp}
{https://orcid.org/0000-0003-3244-6915}
{JSPS KAKENHI Grant Numbers 
JP23K28034, 
JP24H00686, 
JP24H00697.
}
\author{Yuto Okada}
{Nagoya University, Nagoya, Japan
\and \url{https://yutookada.com/en/}}
{pv.20h.3324@s.thers.ac.jp}
{https://orcid.org/0000-0002-1156-0383}
{Supported by JST SPRING, Grant Number JPMJSP2125.}
\author{Yota Otachi}
{Nagoya University, Nagoya, Japan
\and \url{https://www.math.mi.i.nagoya-u.ac.jp/~otachi/}}
{otachi@nagoya-u.jp}
{https://orcid.org/0000-0002-0087-853X}
{JSPS KAKENHI Grant Numbers
JP22H00513, 
JP24H00697, 
JP25K03076, 
JP25K03077. 
}
\author{Hayato Takaike}
{Nagoya University, Nagoya, Japan}
{takaike.hayato@nagoya-u.jp}
{}
{}
\authorrunning{T. Gima,
Y. Kobayashi,
Y. Okada,
Y. Otachi, and
H. Takaike}
\keywords{Terminal monitoring set, Structural graph parameter, Geodesic interval}
\begin{document}
\maketitle


\begin{abstract}
Given a graph $G = (V,E)$, a set $T$ of vertex pairs, and an integer $k$,
{\ProblemName} asks whether there is a set $S \subseteq V$ of size at most $k$ such that for each terminal pair $\{u,v\} \in T$, 
the set $S$ intersects at least one shortest $u$--$v$ path.
Aravind and Saxena [WALCOM~2024] introduced this problem and showed several parameterized complexity results.
In this paper, we extend the known results in both negative and positive directions
and present sharp complexity contrasts with respect to structural graph parameters.

We first show that the problem is NP-complete even on graphs with highly restricted shortest-path structures.
More precisely, we show the NP-completeness on graphs obtained by adding a single vertex to a disjoint union of $5$-vertex paths.
By modifying the proof of this result, we also show the NP-completeness 
on graphs obtained from a path by adding one vertex and 
on graphs obtained from a disjoint union of triangles by adding one universal vertex.
Furthermore, we show the NP-completeness on graphs of bandwidth~$4$ and maximum degree~$5$ by replacing the universal vertex in the last case with a long path.
Under standard complexity assumptions, these negative results rule out fixed-parameter algorithms for most of the structural parameters studied in the literature
(if the solution size~$k$ is not part of the parameter).

We next present fixed-parameter algorithms parameterized by~$k$ plus modular-width and by~$k$ plus vertex integrity.
The algorithm for the latter case does indeed solve a more general setting that includes 
the parameterization by the minimum vertex multiway-cut size of the terminal vertices.
We show that this is tight in the sense that the problem parameterized by the minimum vertex multicut size of the terminal pairs is W[2]-complete.
We then modify the proof of this intractability result and show that the problem is W[2]-complete parameterized by~$k$ even in the setting where $T = \binom{Q}{2}$ for some $Q \subseteq V$.
\end{abstract}



\section{Introduction}

For vertices $u$ and $v$ in a graph $G$, the \emph{geodesic interval} $I_{G}[u,v]$ is 
the set of vertices that belong to at least one shortest $u$--$v$ path in $G$. 
In this paper, we study the problem {\ProblemName} that asks to find a minimum hitting set for a given family of geodesic intervals,
which is formalized as follows.
\begin{tcolorbox}
\begin{description}
  \setlength{\itemsep}{0pt}
  \item[Problem:] {\ProblemName} (\ProblemNameShort)
  \item[Input:] A graph $G = (V,E)$, a set $T$ of vertex pairs, and an integer $k$.
  \item[Question:] Is there $S \subseteq V$ with $|S| \le k$ such that $S \cap I_{G}[u,v] \ne \emptyset$ for each $\{u,v\} \in T$?
\end{description}
\end{tcolorbox}
\noindent
We also study the same problem on edge-weighted graphs, which we call {\WProblemName} (\WProblemNameShort).
We call a vertex belonging to $V(T) = \bigcup_{\{u,v\} \in T}\{u,v\}$ a \emph{terminal vertex}. 
We say that a vertex $w$ \emph{interrupts} a pair $\{u,v\}$ if $w \in I_{G}[u,v]$.
Also, we say that a vertex set $S$ \emph{interrupts} $\{u,v\}$ if one of the members of $S$ interrupts $\{u,v\}$.

The setting of {\ProblemNameShort} may have many practical applications.
For example, it can be seen as a facility location problem in a network $G$
with a set $T$ of departure-destination point pairs of transportation demands.
The goal here is to locate as small number of facilities (e.g., gas stations) as possible in such a way that,
for each demand $\{u,v\} \in T$, one can take a shortest $u$--$v$ path in $G$ that visits at least one facility.

{\ProblemNameShort} has been introduced recently by Aravind and Saxena~\cite{AravindS24},
who motivated the setting as a communication network monitoring problem (and called it \textsc{Terminal Monitoring Set}).
They initiated the study of the parameterized complexity of {\ProblemNameShort} and showed several positive and negative results (see \cref{sec:known-results}).
Our contribution in this paper is to considerably tighten the known tractability/intractability border of {\ProblemNameShort} with respect to structural graph parameters,
as we describe in \cref{sec:our-results}.


\subsection{Known results}
\label{sec:known-results}

\subparagraph{Observations.}
To illustrate connections to other problems,
we first observe that some known facts imply (in)tractability of some cases of {\ProblemNameShort}.
(Some of them will be useful later.)
\begin{enumerate}
  \item {\WProblemNameShort} parameterized by $|T|$ is fixed-parameter tractable.

  \item {\WProblemNameShort} on trees is polynomial-time solvable.

  \item If \textsc{Vertex Cover} is $\mathrm{NP}$-complete on a graph class $\mathcal{C}$, then so is {\ProblemNameShort}\@.

  \item {\ProblemNameShort} on complete graphs is $\mathrm{NP}$-complete.

  \item {\ProblemNameShort} on $r \times r$ grids is $\mathrm{NP}$-complete.

\end{enumerate}

The first one is an easy observation that this is a special case of 
\textsc{Hitting Set} parameterized by the number of subsets to be hit, which is fixed-parameter tractable
(see \cite[Theorem~6.1]{CyganFKLMPPS15}).
The second follows from the fact that the following more general problem is polynomial-time solvable (see, e.g., the discussion by Jansen~\cite{Jansen17}):
given a set of subtrees of a tree, find a minimum vertex set that intersects all subtrees in the set.

The third and fourth observations come from the equivalence of 
an instance $\langle (V,E), k \rangle$ of \textsc{Vertex Cover} and
the instances $\langle (V, E), \binom{V}{2}, k \rangle$ and $\langle (V, \binom{V}{2}), E, k \rangle$ of {\ProblemNameShort}, 
where $\binom{V}{2} = \{\{u,v\} \mid u, v \in V, u \ne v\}$.
The equivalence holds since the only way to interrupts an adjacent pair is to pick one of them.

The fifth fact can be shown by observing an interesting connection to a geometric problem.
\textsc{Piercing Rectangles} asks, given axis-parallel rectangles $R_{1}, \dots, R_{r}$ in the plane and an integer $k$,
whether there is a set of $k$ points that intersects all rectangles $R_{i}$ (see \cite{AgarwalHRS24}).
It is known that even if all coordinates are integers, each rectangle is a $2 \times 2$ square,
and all rectangles are placed in a region of $cr \times cr$ for some constant $c$,
this problem is NP-complete~\cite{FowlerPT81}.\footnote{%
In~\cite{FowlerPT81}, the problem is described as a point-set covering problem by $2 \times 2$ squares, 
which is equivalent to the one described here because in each problem one object hits or covers another object
if the $L_{\infty}$ distance between their centers is at most~$1$.}
Such an instance of \textsc{Piercing Rectangles} can be transformed to an equivalent instance $\langle G, T, k\rangle$ of {\ProblemNameShort} 
by setting $G$ to the $cr \times cr$ grid and adding the pair of two opposite corners of each rectangle $R_{i}$ to $T$.
Observe that a geodesic interval $I_{G}[u,v]$ for $u$ and $v$ is the set of vertices in the axis-parallel rectangle having $u$ and $v$ at opposite corners.

\subparagraph{Results of Aravind and Saxena~\cite{AravindS24}.}
Their main results can be summarized as follows.
\begin{enumerate}
  \item {\ProblemNameShort} is W[2]-complete parameterized by the solution size~$k$.

  \item {\ProblemNameShort} is fixed-parameter tractable parameterized 
  by $k$ plus cluster vertex deletion number.

  \item {\ProblemNameShort} is fixed-parameter tractable parameterized 
  by $k$ plus neighborhood diversity.

  \item {\WProblemNameShort} is fixed-parameter tractable parameterized by feedback edge set number.

  \item {\WProblemNameShort} is fixed-parameter tractable parameterized by vertex cover number.
\end{enumerate}

Their first result complements the fixed-parameter tractability with respect to $|T|$ as $k < |T|$ in nontrivial instances.
Note that this result implies the W[2]-completeness of {\WProblemNameShort} parameterized by $k$ even on complete graphs
(consider the reduction from {\ProblemNameShort} that replaces each nonedge with a huge-weight edge).
In their second and third results, having~$k$ in the parameter is necessary because of the NP-completeness of {\ProblemNameShort} on complete graphs.
Their fourth result generalizes polynomial-time solvability on trees.
Note that they did not explicitly mention the weighted case for feedback edge set number, but their algorithm works without any modifications for the weighted case as well.

Since vertex cover number is an upper bound on the solution size $k$,
the parameterization solely by vertex cover number is equivalent to the one by $k$ plus vertex cover number.
Thus, most of the known positive results (except for the one parameterized by feedback edge set number) have ``plus $k$'' in their parameters.

As open questions, Aravind and Saxena~\cite{AravindS24} asked for the complexity of {\ProblemNameShort} parameterized by
treewidth, pathwidth, feedback vertex set number, and distance to path forest.


\subsection{Our results}
\label{sec:our-results}

\subparagraph{NP-completeness of {\ProblemNameShort} on highly restricted graphs.}
Our first results demonstrate that {\ProblemNameShort} is NP-complete even if the target graph classes are highly restricted,
and thus answer the open questions of Aravind and Saxena~\cite{AravindS24} in a negative way.
More precisely, we show that the problem is NP-complete on the following classes:
\begin{enumerate}
  \item graphs of vertex-deletion distance~$1$ to the class of disjoint unions of $5$-vertex paths;
  \item graphs of vertex-deletion distance~$1$ to the class of paths;
  \item graphs of vertex-deletion distance~$1$ to the class of disjoint unions of triangles;
  \item graphs of bandwidth~$4$ and maximum degree~$5$.
\end{enumerate}

Our main technical contribution here is the first one,
which already implies that {\ProblemNameShort} is paraNP-complete for most of the structural parameters studied in the literature
(see \cref{fig:graph-parameters}~(left))
as the graphs there have distance to path forest~$1$ and vertex integrity~$6$.
The second one is an easy corollary of the first.
The third one, which improves the vertex integrity bound to~$4$, is based on the same idea as the first.
The fourth one can be obtained from the third by a minor modification.

\subparagraph{FPT with $k$ plus structural parameters.}
To cope with the rather hopeless situation discussed above, we consider the setting where the solution size~$k$ is also part of the parameter.
In this setting, we present the following positive results.
\begin{enumerate}
  \item {\ProblemNameShort} is fixed-parameter tractable parameterized by~$k$ plus modular-width.
  \item {\WProblemNameShort} is fixed-parameter tractable parameterized by~$k$ plus vertex integrity.
  \item {\WProblemNameShort} is fixed-parameter tractable parameterized by the minimum vertex multiway-cut size of
  the set of terminal vertices $V(T)$.
\end{enumerate}

The first result for $k$ plus modular-width generalizes the one for $k$ plus neighborhood diversity~\cite{AravindS24}.
The second and third results are achieved by a single algorithm that solves {\WProblemNameShort} 
when a small separator of a special kind is given as part of the input.
After presenting the algorithm, we show that the two cases can be handled by this algorithm
without the assumption that the separator is given as part of the input (see \cref{sssec:app-main-alg} for details).
Observe that the second result generalizes 
the one parameterized solely by vertex cover number~\cite{AravindS24} since vertex cover number is an upper bound of both $k$ and vertex integrity.
The third result generalizes the one parameterized by $|T|$,
which can be seen as an example of the conceptual framework of Jansen and Swennenhuis~\cite{JansenS24}
for parameterizing terminal-involved problems with a parameter structurally smaller than the number of terminals.
Although the third result does not explicitly have~$k$ in the parameter,
we can see that its parameter is an upper bound of~$k$.

\cref{fig:graph-parameters} (right) shows
the complexity of {\ProblemNameShort} parameterized by solution size~$k$ plus a structural parameter.
As we can see, it is not well understood yet, and many cases remain unsettled.

\subparagraph{Parameterized intractability beyond $k$.}
As mentioned above, there are many unsettled cases in the ``plus~$k$'' setting.
Actually, we do not know any intractability result like ``{\ProblemNameShort} is W-hard parameterized by $k$ plus X'' for any structural graph parameter X\@.

As intermediate steps toward this direction, we present two intractability results for the following cases, where 
$k$ is part of the parameter, implicitly or explicitly.
\begin{enumerate}
  \item {\ProblemNameShort} is W[2]-complete parameterized by the minimum vertex multicut size of the set of terminal pairs $T$.
  \item {\ProblemNameShort} is W[2]-complete parameterized by~$k$ even if $T = \binom{Q}{2}$ for some $Q \subseteq V$.
\end{enumerate}

The first one implicitly has~$k$ in the parameter since such a cut size is an upper bound of~$k$.
Also, it is a lower bound of the minimum vertex multiway-cut size of $V(T)$.
In this sense, this intractability result tightens the gap between the W[2]-completeness parameterized solely by~$k$~\cite{AravindS24}
and our fixed-parameter tractability result parameterized by the minimum vertex multiway-cut size of $V(T)$.

The second result shows that a natural restriction of {\ProblemNameShort} is still W[2]-complete parameterized solely by~$k$,
where we want to hit pairwise geodesic intervals among a given vertex subset $Q \subseteq V$.
This result is shown by a modification of the proof of the first result.

\begin{figure}[t]
    \centering


\definecolor{tkzdarkred}{rgb}{.5,.1,.1}
\definecolor{tkzdarkorange}{rgb}{.7,.25,0}
\definecolor{tkzdarkblue}{rgb}{.1,.1,.5}
\definecolor{tkzdarkgreen}{rgb}{.1,.35,.15}

\tikzset{npc/.style = {draw,semithick,rectangle,double,tkzdarkred,align=center}}
\tikzset{whd/.style = {draw,semithick,rectangle,tkzdarkorange,align=center}}
\tikzset{fpt/.style = {draw,semithick,rectangle,rounded corners,tkzdarkgreen,align=center}}
\tikzset{unk/.style = {gray,align=center}}

\begin{tikzpicture}[semithick, every node/.style={font=\small,text height=1.3ex, text depth = 0.2ex},scale=0.82]
  \node[npc] (cw) at (0cm, 5.0cm) {$\cw$};
  \node[npc] (tw) at (0cm, 4.0cm) {$\tw$};
  \node[npc] (pw) at (0cm, 3.0cm) {$\pw$};
  \node[npc] (td) at (0cm, 2.0cm) {$\td$};
  \node[npc] (vi) at (0cm, 1.0cm) {$\vi$\thispaper};
  \node[fpt] (vc) at (0cm, 0.0cm) {$\vc$~\cite{AravindS24}};
  \node[npc] (mw) at (-3cm, 3.0cm) {$\mw$};
  \node[npc] (nd) at (-3cm, 1.0cm) {$\nd$\easyobs};
  \node[npc] (cvd) at (-1.5cm, 3cm) {$\cvd$};
  \node[npc] (tc) at (-1.5cm, 1.2cm) {$\tc$\easyobs};
  \node[npc] (fvs) at (3.5cm, 2.6cm) {$\fvs$};
  \node[fpt] (fes) at (4cm, 1.0cm) {$\fes$~\cite{AravindS24}};
  \node[fpt] (mln) at (2cm, 0.0cm) {$\mln$};
  \node[npc] (bw) at (1.1cm, 1.7cm) {$\bw$\thispaper};
  \node[npc] (dpf) at (2.5cm, 1.0cm) {$\dpf$\thispaper};

  \draw (cw) -- (mw) -- (nd) -- (vc);
  \draw (cw) -- (cvd) -- (tc) -- (vc);
  \draw (mw) -- (tc);
  \draw (cw) -- (tw) -- (pw) -- (td) -- (vi) -- (vc);
  \draw (tw) -- (fvs) -- (vc);
  \draw (fvs) -- (fes) -- (mln);
  \draw (pw) -- (bw) -- (mln);
  \draw (pw) to [out=-20, in=110] (dpf) -- (vc);
  \draw (fvs) -- (dpf) -- (mln);
  
\end{tikzpicture} 
\hfill
\textcolor{lightgray}{\vrule}
\hfill
\begin{tikzpicture}[semithick, every node/.style={font=\small,text height=1.3ex, text depth = 0.2ex},scale=0.82]
  \node[unk] (cw) at (0cm, 5.0cm) {$\cw$};
  \node[unk] (tw) at (0cm, 4.0cm) {$\tw$};
  \node[unk] (pw) at (0cm, 3.0cm) {$\pw$};
  \node[unk] (td) at (0cm, 2.0cm) {$\td$};
  \node[fpt] (vi) at (0cm, 1.0cm) {$\vi$\thispaper};
  \node[fpt] (vc) at (0cm, 0.0cm) {$\vc$~\cite{AravindS24}};
  \node[fpt] (mw) at (-3cm, 3.0cm) {$\mw$\thispaper};
  \node[fpt] (nd) at (-3cm, 1.0cm) {$\nd$~\cite{AravindS24}};
  \node[fpt] (cvd) at (-1.5cm, 3cm) {$\cvd$~\cite{AravindS24}};
  \node[fpt] (tc) at (-1.5cm, 1.2cm) {$\tc$};
  \node[unk] (fvs) at (3.5cm, 2.6cm) {$\fvs$};
  \node[fpt] (fes) at (4cm, 1.0cm) {$\fes$~\cite{AravindS24}};
  \node[fpt] (mln) at (2cm, 0.0cm) {$\mln$};
  \node[unk] (bw) at (1.1cm, 1.7cm) {$\bw$};
  \node[unk] (dpf) at (2.5cm, 1.0cm) {$\dpf$};

  \draw (cw) -- (mw) -- (nd) -- (vc);
  \draw (cw) -- (cvd) -- (tc) -- (vc);
  \draw (mw) -- (tc);
  \draw (cw) -- (tw) -- (pw) -- (td) -- (vi) -- (vc);
  \draw (tw) -- (fvs) -- (vc);
  \draw (fvs) -- (fes) -- (mln);
  \draw (pw) -- (bw) -- (mln);
  \draw (pw) to [out=-20, in=110] (dpf) -- (vc);
  \draw (fvs) -- (dpf) -- (mln);

  \node (plusk) at (2.5cm, 4.5cm) {${}+k$};
\end{tikzpicture} 

    \caption{The figure on the left shows the complexity of {\ProblemName} parameterized solely by each parameter,
      and the one on the right shows the complexity parameterized by each parameter plus solution size~$k$.
      (See \cref{sec:preliminaries} for the non-abbreviated names of the parameters.)
      The results marked with{\thispaper} are shown in this paper
      and the ones with{\easyobs} are easy observations.
      The double rectangles and the rounded rectangles indicate paraNP-complete and fixed-parameter tractable cases, respectively. 
      Ones without frames remain unsettled.
      A connection between two parameters means that if the one below (say $\beta$) is bounded, then so is the one above (say $\alpha$);
      that is, there is a function $f$ such that $\alpha(G) \le f(\beta(G))$ for every graph $G$.
      Thus, a positive result propagates downward, while a negative result propagates upward.
    }

    \label{fig:graph-parameters}
\end{figure}


\subsection{Related work}
In the fields of graph theory and graph algorithms, 
the concept of geodesic intervals (also known as \emph{geodetic intervals} or just as \emph{intervals}) is well studied,
especially in the context of the problem \textsc{Geodetic Set}~\cite{HararyLT93} (see the survey on \textsc{Geodetic Set} \cite{BresarKT11} and the references therein).
In a graph $G = (V,E)$, a set $S \subseteq V$ is a \emph{geodetic set} if $\bigcup_{\{u,v\} \in \binom{S}{2}} I_{G}[u,v] = V$;
that is, the union of pairwise geodesic intervals among the vertices in $S$ covers all vertices of $G$.
Given a graph $G$ and an integer $k$, \textsc{Geodetic Set} asks whether $G$ has a geodetic set of size at most~$k$.
\textsc{Geodetic Set} is NP-complete~\cite{Atici02,DoughatK96}
and several parameterized complexity results have been shown recently~\cite{KellerhalsK22,Tale25arXiv}.
Although {\ProblemName} and \textsc{Geodetic Set} share some concepts in common, there seem to be no direct connections. 



\section{Preliminaries}
\label{sec:preliminaries}

Throughout the paper, we assume that the reader is familiar with the concepts in parameterized complexity and fixed-parameter tractability
(see standard textbooks, e.g., \cite{CyganFKLMPPS15,DowneyF99,DowneyF13,FlumG06,Niedermeier06}).

All graphs considered in this paper are finite, simple, and undirected.
In \cref{sec:fpt-mwc-vi}, we consider edge-weighted graphs, where the weight of each edge represents its length.
To guarantee the existence of shortest paths, we assume that the input graph does not contain a cycle of a negative total weight.
It is well known that even in the edge-weighted setting, we can compute all-pairs shortest paths (or find a negative cycle) in polynomial time
using standard algorithms (see e.g., \cite{CormenLRS22}).

As mentioned in the introduction, {\WProblemNameShort} is a special case of \textsc{Hitting Set}.
Given a finite set $U$, a family $\mathcal{F} \subseteq 2^{U}$, and an integer $k$,
\textsc{Hitting Set} asks whether there is $S \subseteq U$ with $|S| \le k$ such that $S \cap F \ne \emptyset$ for each $F \in \mathcal{F}$.
We can see that an instance $\langle G = (V,E), T, k \rangle$ of {\WProblemNameShort} is equivalent to
the instance $\langle V, \{I_{G}[u,v] \mid \{u,v\} \in T\}, k \rangle$ of \textsc{Hitting Set},
where the family $\{I_{G}[u,v] \mid \{u,v\} \in T\}$ can be computed in polynomial time.
\textsc{Hitting Set} is known to be W[2]-complete parameterized by $k$,
while it is fixed-parameter tractable parameterized by $|U|$ and by $|\mathcal{F}|$ (see \cite{CyganFKLMPPS15}).

Although our results involve many structural graph parameters as shown in \cref{fig:graph-parameters}, 
not all of the parameters need their formal definitions for the presentations in this paper.
Thus, we only define some of them when it is necessary.
For the definitions and relationships of the parameters,
we refer the reader to the survey on structural graph parameters~\cite{SorgeW19} (see also \cite{Schroder19,Tran22}).

In \cref{fig:graph-parameters}, we abbreviated the names of graph parameters as follows:
$\cw$ is cliquewidth;
$\tw$ is treewidth; 
$\pw$ is pathwidth; 
$\td$ is treedepth; 
$\vi$ is vertex integrity;
$\vc$ is vertex cover number;
$\mw$ is modular-width;
$\nd$ is neighborhood diversity;
$\cvd$ is cluster vertex deletion number;
$\tc$ is twin cover number;
$\bw$ is bandwidth;
$\mln$ is max-leaf number;
$\fvs$ is feedback vertex set number;
$\fes$ is feedback edge set number; and 
$\dpf$ is distance to path forest.


\section{NP-completeness on highly restricted graphs}
\label{sec:npc}

Since {\ProblemNameShort} clearly belongs to NP, we only prove the NP-hardness in the following proofs.
\begin{theorem}
\label{thm:path-forest}
{\ProblemName} is $\mathrm{NP}$-complete on graphs of vertex-deletion distance~$1$ to the class of disjoint unions of $5$-vertex paths.
\end{theorem}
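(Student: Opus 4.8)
The membership in $\mathrm{NP}$ is immediate, since a candidate solution $S$ can be verified in polynomial time by computing all-pairs shortest paths and checking that $S \cap I_{G}[u,v] \neq \emptyset$ for every $\{u,v\} \in T$. For the hardness, the plan is to give a polynomial-time reduction from an $\mathrm{NP}$-complete constraint-type problem; concretely I would reduce from a suitable satisfiability or coloring problem whose instances consist of variables, each taking one of a constant number of values, together with pairwise constraints. The high-level idea is to represent each variable by its own $P_{5}$ component, to use the single apex vertex $w$ (the one vertex whose deletion leaves the disjoint $5$-vertex paths) as the sole connector between components, and to encode constraints by terminal pairs. Since every component is a $P_{5}$, within a component the geodesic interval of a pair is essentially a subpath, so each gadget offers only a small, highly structured set of vertices that can interrupt a given pair. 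The budget $k$ is chosen to equal the number of gadgets, forcing the solution to spend essentially one vertex per gadget and thereby to select, in each gadget, the value of the corresponding variable.

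First I would install, in each $P_{5}$ gadget, a few \emph{guard} pairs whose terminals lie inside the gadget and whose geodesic intervals avoid $w$. These pairs serve two purposes: they restrict the admissible choices inside the gadget to a constant-size set of value vertices, and, crucially, they force at least one chosen vertex to lie strictly inside the gadget. Combined with the tight budget equal to the number of gadgets, this is precisely what neutralizes the apex: because no guard pair can be interrupted by $w$, every gadget must pay for an internal vertex, leaving no room to also select $w$; hence in any solution of size $k$ we may assume $w \notin S$. I would then encode each pairwise constraint between variables $x_{i}$ and $x_{j}$ by one or more \emph{constraint} pairs whose terminals lie in the two gadgets $G_{i}$ and $G_{j}$. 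Such a pair is a cross pair, so its geodesic interval necessarily runs through $w$; but since $w \notin S$, interrupting it reduces to selecting, inside $G_{i}$ or $G_{j}$, a vertex lying on the portion of the interval internal to that gadget. By choosing the adjacencies of $w$ and the positions of the terminals carefully, I would arrange these internal portions to be exactly the value vertices compatible with the constraint, so that the constraint pair is interrupted if and only if the chosen values of $x_{i}$ and $x_{j}$ satisfy it.

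For correctness, in the forward direction a satisfying assignment yields a solution by selecting, in each gadget, the value vertex named by the assignment: the guard pairs are interrupted by construction, and each constraint pair is interrupted because the assignment satisfies the corresponding constraint. In the backward direction, a solution of size $k$ selects (by the budget argument above) exactly one internal vertex per gadget and avoids $w$; reading off the selected value vertices gives an assignment, and the fact that every constraint pair is interrupted forces this assignment to satisfy all constraints. The main obstacle is twofold and lies entirely in the interaction with the apex. On the one hand, $w$ is a cut vertex lying on every geodesic interval that spans two components, so a single selection of $w$ would interrupt all cross pairs at once; the guard-pair-plus-tight-budget device above is exactly what I expect to need in order to rule this out. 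On the other hand — and this is the genuinely delicate part — the only inter-gadget communication available is through $w$, and cross-pair intervals are therefore very constrained; realizing constraints expressive enough to capture an $\mathrm{NP}$-hard problem (rather than a polynomially solvable one) requires a careful design of $w$'s adjacencies and of the terminals so that the internal portions of the cross-pair intervals cut out exactly the intended sets of value vertices, possibly exploiting distance ties that make such an interval a union of several subpaths.
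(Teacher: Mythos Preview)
Your framework is exactly the paper's: reduce from a colouring problem, attach one $P_{5}$ per variable to a single apex, use local guard pairs to force the solution inside the gadgets (and hence off the apex), and use cross-gadget pairs for the constraints. The paper indeed reduces from \textsc{3-Coloring}.

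The gap is the budget. You set $k$ equal to the number of gadgets, i.e.\ \emph{one} selected vertex per gadget; the paper sets $k=2|V|$, i.e.\ \emph{two} per gadget, and this is where the real content of the construction lies. With a single unit per gadget, the only guard pairs whose interval is guaranteed to miss the apex are \emph{edges} of the $P_{5}$, and one edge-guard yields only two admissible values. Pairwise constraints over a two-element domain reduce to \textsc{2-SAT}, so no $\mathrm{NP}$-hardness can come out of that. To get three values with one vertex you would need a non-edge guard whose interval stays inside the $P_{5}$; that forces the apex to be non-adjacent to some interior vertex, and then every cross interval enters and leaves the gadget along a subpath through that vertex. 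A short case check shows that, for any placement of three value positions on the $P_{5}$ and any apex adjacency compatible with such a guard, the interval of a cross pair restricted to the value positions is always an interval of the path --- you can never realise the ``skip the middle'' set $\{b,d\}$ needed to express ``not both colour $2$''. So the one-per-gadget plan cannot encode all three inequality constraints simultaneously.

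The paper's fix is to write each $P_{5}$ as $v_{1}\,v_{1}'\,v_{2}\,v_{3}'\,v_{3}$, install \emph{two} edge-guards $\{v_{1},v_{1}'\}$ and $\{v_{3},v_{3}'\}$ (hence $k=2|V|$), and make the apex $s^{*}$ adjacent to everything except the centres $v_{2}$. The two binary choices encode the three colours via $\{v_{1}',v_{3}\}$, $\{v_{1},v_{3}\}$, $\{v_{1},v_{3}'\}$. For each edge $\{u,v\}$ of the colouring instance one adds $\{u_{1},v_{1}\}$, $\{u_{2},v_{2}\}$, $\{u_{3},v_{3}\}$: the outer two have interval $\{u_{i},s^{*},v_{i}\}$, while the middle one --- because $s^{*}$ does not see the centres --- has interval $\{u_{2},u_{1}',u_{3}',s^{*},v_{1}',v_{3}',v_{2}\}$. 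Precisely the $i$-th pair is left uninterrupted when both endpoints receive colour~$i$. Everything else in your outline (guards exclude the apex, tight budget forbids selecting it, backward direction reads off an assignment) then goes through verbatim.
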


\begin{proof}
We present a polynomial-time reduction from \textsc{3-Coloring}, which is NP-complete~\cite[GT4]{GareyJ79}.
Given a graph $G = (V,E)$, \textsc{3-Coloring} asks whether there is a \emph{proper $3$-coloring} of $G$,
i.e., a mapping $c \colon V \to \{1,2,3\}$ such that $c(u) \ne c(v)$ for every $\{u,v\} \in E$.

Let $G = (V,E)$ be an instance of \textsc{3-Coloring}.
In the following, we construct an equivalent instance $\langle H, T, k \rangle$ of {\ProblemName}.

\proofsubparagraph{Construction of $\langle H, T, k \rangle$.}
We set $k = 2|V|$.
For each vertex $v \in V$, we construct a path $P_{v}$ of five vertices
$v_{1}$, $v_{1}'$, $v_{2}$, $v_{3}'$, and $v_{3}$ that appear in this order along the path.
To complete the construction of $H$, 
we add one more vertex, $s^{*}$, that is adjacent to almost all other vertices except for the center vertex $v_{2}$ in each path $P_{v}$ for each $v \in V$.
See \cref{fig:path-forest}.
Clearly, $H - s^{*}$ is a disjoint union of $5$-vertex paths.
For each $v \in V$, we add pairs $\{v_{1}, v_{1}'\}$ and $\{v_{3}, v_{3}'\}$ into~$T$.
For each $\{u,v\} \in E$, we add pairs $\{u_{1}, v_{1}\}$, $\{u_{2}, v_{2}\}$, and $\{u_{3}, v_{3}\}$ into~$T$.

\begin{figure}[tbh]
  \centering
  \includegraphics{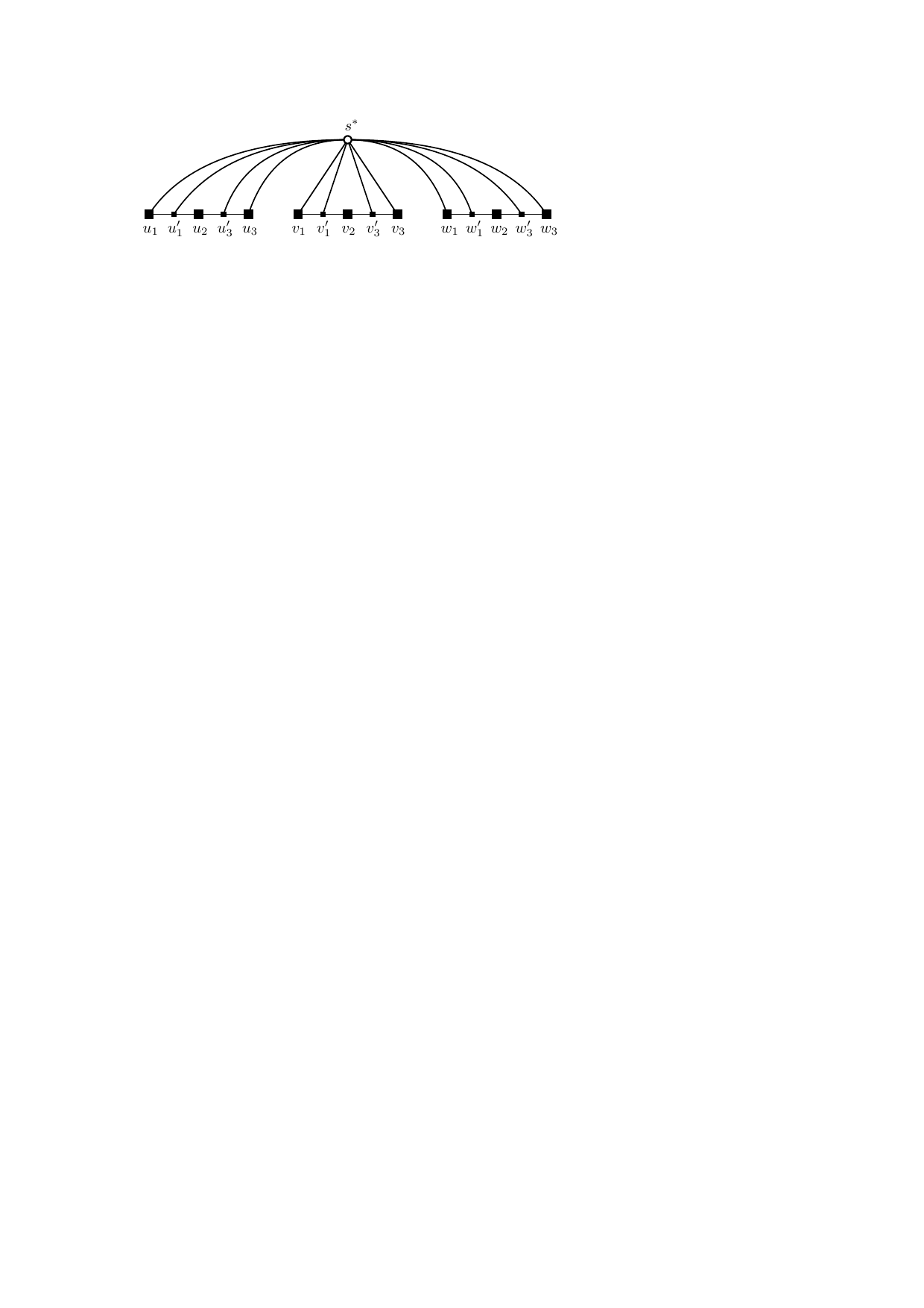}
  \caption{The construction of $H$ in \cref{thm:path-forest} (when $G$ has only three vertices $u$, $v$, $w$).
    We represent terminal vertices by squares.
    Some terminal vertices are depicted as small squares since they appear only in ``local'' terminal pairs.}
  \label{fig:path-forest}
\end{figure}

\proofsubparagraph{If $G$ is a yes instance.}
Let $c \colon V \to \{1, 2, 3\}$ be a proper 3-coloring of $G$.
For each $v \in V$, let $S_{v} \subseteq \{v_{1}, v_{1}', v_{2}, v_{3}', v_{3}\}$ be the following set depending on~$c(v)$:
\[
  S_{v} = 
  \begin{cases}
    \{v_{1}', v_{3}\} & \textrm{if } c(v) = 1, \\
    \{v_{1}, v_{3}\}  & \textrm{if } c(v) = 2, \\
    \{v_{1}, v_{3}'\} & \textrm{if } c(v) = 3. \\
  \end{cases}
\]
We set $S = \bigcup_{v \in V} S_{v}$. 
Since $|S| = 2|V| = k$, it suffices to show that $S$ interrupts all pairs in $T$.
For each $v \in V$, it is easy to see that $S_{v}$ interrupts the pairs $\{v_{1}, v_{1}'\}$ and $\{v_{3}, v_{3}'\}$.

For each edge $\{u,v\} \in E$, we show that $S_{u} \cup S_{v}$ interrupts the pairs $\{u_{1}, v_{1}\}$, $\{u_{2}, v_{2}\}$, and $\{u_{3}, v_{3}\}$.
Since $c$ is a proper 3-coloring, we have $c(u) \ne c(v)$, and thus $\{c(u), c(v)\} \cap \{1,3\} \ne \emptyset$.
By symmetry, we may assume that $c(u) \in  \{1,3\}$. Assume that $c(u) = 1$.
Then, $S_{u} = \{u_{1}', u_{3}\}$ interrupts $\{u_{3}, v_{3}\}$
and also $\{u_{2}, v_{2}\}$ as $u_{1}' \in I_{H}[u_{2}, v_{2}]$.
Since $c(v) \in \{2,3\}$, $S_{v}$ includes $v_{1}$, and thus it interrupts $\{u_{1}, v_{1}\}$.
The other case of $c(u) = 3$ can be handled in the same way by swapping $1$ and $3$ in the discussion above.

\proofsubparagraph{If $\langle H, T, k \rangle$ is a yes instance.}
Let $S \subseteq V(H)$ be a set with $|S| \le k$ that interrupts all pairs in $T$.
Since $\{v_{1}, v_{1}'\}, \{v_{3}, v_{3}'\} \in E(H) \cap T$ for every $v \in V$ and $k = 2|V|$, we have
$|S \cap \{v_{1}, v_{1}'\}| = |S \cap \{v_{3}, v_{3}'\}| = 1$ for each $v \in V$
and
$S \subseteq \bigcup_{v \in V} \{v_{1}, v_{1}', v_{3}, v_{3}'\}$. 
Let $S_{v} = S \cap \{v_{1}, v_{1}', v_{3}, v_{3}'\}$.
We define $c(v) \in \{1,2,3\}$ for each $v \in V$ as follows:
\[
  c(v) =
  \begin{cases}
    1 & \textrm{if } v_{1} \notin S_{v}, \\
    3 & \textrm{else if } v_{3} \notin S_{v}, \\
    2 & \textrm{otherwise } (S_{v} = \{v_{1}, v_{3}\}). \\
  \end{cases}
\]
We show that $c$ is a proper 3-coloring of $G$.
Suppose to the contrary that there is an edge $\{u,v\} \in E$ with $c(u) = c(v) = x$.
If $x \in \{1,3\}$, then $u_{x}, v_{x} \notin S_{u} \cup S_{v}$.
Since $I_{H}[u_{x}, v_{x}] = \{u_{x}, s^{*}, v_{x}\}$ and $s^{*} \notin S$,
$S$ does not interrupt the pair $\{u_{x}, v_{x}\} \in T$.
If $x = 2$, then $S_{u} \cup S_{v} = \{u_{1}, u_{3}, v_{1}, v_{3}\}$ does not intersect 
$I_{H}[u_{2}, v_{2}] = \{u_{2}, u_{1}', u_{3}', s^{*}, v_{1}', v_{3}', v_{2}\}$,
and thus $S$ ($\not\owns s^{*}$) does not interrupt the pair $\{u_{2}, v_{2}\} \in T$.
\end{proof}

To the graph $H$ constructed in the proof of \cref{thm:path-forest}, we may add some vertices and edges without invalidating the reduction
as long as the geodesic intervals to be hit do not change.
In particular, we can pick arbitrary two vertices $u$ and $v$ of $G$ 
and add a long path between $u_{3}$ and $v_{1}$ to $H$.\footnote{%
Actually, we can show that adding an edge between $u_{3}$ and $v_{1}$ is good enough.}
Repeating this in an appropriate way, we can make $H - s^{*}$ a path and obtain the following corollary.
\begin{corollary}
\label{cor:path}
{\ProblemName} is $\mathrm{NP}$-complete on graphs of vertex-deletion distance~$1$ to the class of paths.
\end{corollary}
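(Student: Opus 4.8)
The plan is to reuse the instance $\langle H, T, k\rangle$ produced in the proof of \cref{thm:path-forest} and to glue the disjoint $5$-vertex paths $P_{v}$ into a single path by inserting long connector paths, arguing that none of the relevant geodesic intervals change. Concretely, fix an arbitrary ordering $v^{1}, \dots, v^{n}$ of $V$ (where $n = |V|$) and form $H'$ from $H$ by adding, for each $i \in \{1, \dots, n-1\}$, a fresh path $Q_{i}$ of length $\ell$ connecting $v^{i}_{3}$ and $v^{i+1}_{1}$, whose $\ell - 1$ internal vertices are new and adjacent only to their neighbors on $Q_{i}$ (in particular not to $s^{*}$). We keep $T$ and $k$ unchanged. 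Since $H - s^{*}$ is the disjoint union of $P_{v^{1}}, \dots, P_{v^{n}}$ and each $Q_{i}$ joins the end $v^{i}_{3}$ of $P_{v^{i}}$ to the start $v^{i+1}_{1}$ of $P_{v^{i+1}}$, the graph $H' - s^{*}$ is a single path. Thus $H'$ has vertex-deletion distance~$1$ to the class of paths, and it remains to show that $\langle H', T, k\rangle$ is a yes-instance if and only if $\langle H, T, k\rangle$ is.

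The crux is the claim that $I_{H'}[x,y] = I_{H}[x,y]$ for every pair $\{x,y\} \in T$; once this is established the two instances are literally the same \textsc{Hitting Set} instance (recall the reformulation from \cref{sec:preliminaries}), and the corollary follows from \cref{thm:path-forest}. To prepare for this I would first record that $\mathrm{diam}(H) = 4$: because $s^{*}$ is adjacent to every vertex except the centers $v_{2}$, any two non-center vertices lie at distance at most~$2$ through $s^{*}$, and the farthest pairs are two centers $u_{2}, v_{2}$, joined by the length-$4$ path $u_{2}, u_{1}', s^{*}, v_{1}', v_{2}$. I then choose $\ell > 4$ (say $\ell = 5$).

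The key step is to show that distances between original vertices are preserved. Since the connector endpoints $v^{i}_{3}$ and $v^{i+1}_{1}$ are non-center vertices, they already lie at distance~$2$ in $H$ via $s^{*}$, so no $Q_{i}$ can be a shortcut. More precisely, the internal vertices of the $Q_{i}$ have degree~$2$, so any simple $x$–$y$ path in $H'$ between original vertices that uses a connector edge must traverse the corresponding $Q_{i}$ in full and therefore has length at least $\ell > 4 \ge d_{H}(x,y)$; hence every shortest $x$–$y$ path in $H'$ lives entirely inside $H$ and avoids all connector interiors. This gives $d_{H'}(x,y) = d_{H}(x,y)$ and $I_{H'}[x,y] = I_{H}[x,y]$ for all original $x,y$, in particular for all terminal pairs, completing the reduction.

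The main obstacle is exactly this distance-preservation step: one must rule out both that a connector creates a genuine shortcut and, more subtly, that it introduces a \emph{new} shortest path of the same length that would enlarge some geodesic interval. Making the connectors long (rather than single edges) dispatches both concerns uniformly, since then every connector edge is simply too expensive to appear on any shortest path between original vertices. This is why I add long paths rather than the single edges mentioned in the footnote, for which the same conclusion holds but would have to be verified separately for each type of terminal pair.
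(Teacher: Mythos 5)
Your proposal is correct and follows essentially the same route as the paper: the paper likewise turns $H - s^{*}$ into a single path by inserting long connector paths between $u_{3}$ and $v_{1}$ for consecutive vertices, relying on the fact that the geodesic intervals of the terminal pairs are unchanged. Your write-up just makes explicit the diameter bound $\mathrm{diam}(H)=4$ and the choice $\ell>4$ that the paper leaves implicit.
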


The next theorem can be shown by replacing $5$-vertex paths in the proof of \cref{thm:path-forest} with triangles.
(The proofs of the claims marked with {\mainbodyrepeatedtheorem} are deferred to the appendix.)
\begin{theoremrep}
\label{thm:triangle}
{\ProblemName} is $\mathrm{NP}$-complete on graphs of vertex-deletion distance~$1$ to the disjoint unions of triangles.
\end{theoremrep}
\begin{proof}
Let $G = (V,E)$ be an instance of \textsc{3-Coloring}.
We set $k = 2|V|$.
For each $v \in V$, the graph $H$ contains a triangle $C_{v}$ with vertices $v_{1}$, $v_{2}$, and $v_{3}$.
Additionally, $H$ has vertex $s^{*}$ adjacent to all other vertices. See \cref{fig:triangle-bw}~(left).
For each $v \in V$, we add pairs $\{v_{1}, v_{2}\}$, $\{v_{2}, v_{3}\}$, and $\{v_{3}, v_{1}\}$ into~$T$.
For each $\{u,v\} \in E$, we add pairs $\{u_{1}, v_{1}\}$, $\{u_{2}, v_{2}\}$, and $\{u_{3}, v_{3}\}$ into~$T$.

Assume that $G$ admits a proper 3-coloring $c$.
For each $v \in V$, let $S_{v} = \{v_{1}, v_{2}, v_{3}\} \setminus \{v_{c(v)}\}$.
For each $v \in V$, $S_{v}$ interrupts the pairs $\{v_{1}, v_{2}\}$, $\{v_{2}, v_{3}\}$, and $\{v_{3}, v_{1}\}$.
For each edge $\{u,v\} \in E$, since $c(u) \ne c(v)$, 
$S_{u} \cup S_{v}$ includes at least one of $u_{i}$ and $v_{i}$ for each $i \in \{1,2,3\}$,
and thus, $S_{u} \cup S_{v}$ interrupts the pairs $\{u_{1}, v_{1}\}$, $\{u_{2}, v_{2}\}$, and $\{u_{3}, v_{3}\}$.
Since $\bigcup_{v \in V} S_{v}$ has size $2|V| = k$,
$\langle H, T, k \rangle$ is a yes instance of {\ProblemNameShort}\@.

Next assume that $\langle H, T, k \rangle$ is a yes instance of {\ProblemNameShort},
and that $S \subseteq V(H)$ interrupts all pairs in $T$ and has size at most $k$.
Since $\{v_{1}, v_{2}\}, \{v_{2}, v_{3}\}, \{v_{1}, v_{1}\} \in E(H) \cap T$ and $k = 2|V|$, we have
$s^{*} \notin S$ and $|S \cap \{v_{1}, v_{2}, v_{3}\}| = 2$ for each $v \in V$. 
We define $c(v) \in \{1,2,3\}$ for each $v \in V$ to be the unique index $i \in \{1,2, 3\}$ such that $v_{i} \notin S$.
Suppose that there is an edge $\{u,v\} \in E$ with $c(u) = c(v) = x$.
The definition of $c$ implies that $u_{x}, v_{x} \notin S \cap \{u_{1}, u_{2}, u_{3}, v_{1}, v_{2}, v_{3}\}$.
Since $s^{*} \notin S$ and $I_{H}[u_{x}, v_{x}] = \{u_{x}, s^{*}, v_{x}\}$, 
$S$ does not interrupt the pair $\{u_{x}, v_{x}\} \in T$.
\begin{figure}[tbh]
  \centering
  \includegraphics{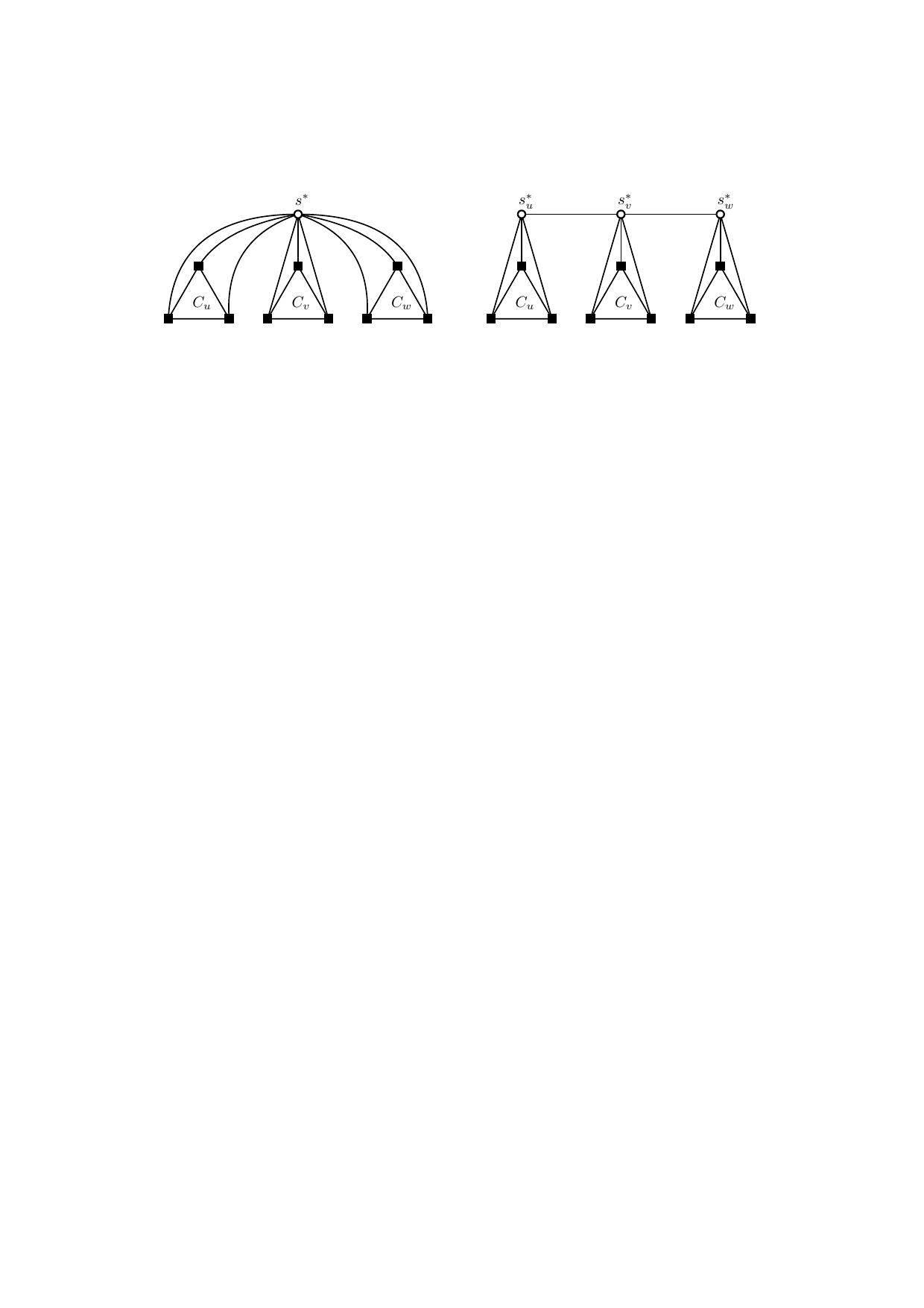}
  \caption{The construction of $H$ and $H'$ in \cref{thm:triangle,cor:bw}.}
  \label{fig:triangle-bw}
\end{figure}
\end{proof}

To show the hardness on graphs of bandwidth~$4$ and maximum degree~$5$, we only need to slightly modify the proof of \cref{thm:triangle}.
For an $n$-vertex graph $G = (V,E)$, its bandwidth $\bw(G)$ is defined as $\bw(G) = \min_{\sigma} \max_{\{u,v\} \in E} |\sigma(u) - \sigma(v)|$,
where the minimum is taken overall bijections $\sigma \colon V \to [n]$, where $[n] = \{i \in \mathbb{Z} \mid 1 \le i \le n\}$.
\begin{corollaryrep}
\label{cor:bw}
{\ProblemName} is $\mathrm{NP}$-complete on graphs of bandwidth~$4$ and maximum degree~$5$.
\end{corollaryrep}
\begin{proof}
Let $H$ be the graph constructed in the proof of \cref{thm:triangle}.
We first remove $s^{*}$ from $H$. 
Next, for each $v \in V$, we add a vertex $s^{*}_{v}$ adjacent to all vertices in the triangle $C_{v}$.
Finally, we add $|V|-1$ edges to make a path that goes through the new vertices $\{s^{*}_{v} \mid v \in V\}$.
See \cref{fig:triangle-bw}~(right).
Let us call the modified graph $H'$.
We can see that $\langle H, T, k \ (= 2|V|) \rangle$ and $\langle H', T, k \rangle$ are equivalent instances of {\ProblemNameShort}
since a solution $S$ of $\langle H', T, k \rangle$ cannot include any vertex in $\{s^{*}_{v} \mid v \in V\}$.
Clearly, $H'$ has maximum degree~$5$.
To see that $H'$ has bandwidth~$4$, consider the ordering that places the vertices along the path on $\{s^{*}_{v} \mid v \in V\}$,
while the vertices of $C_{v}$ are placed right after $s^{*}_{v}$.
This ordering has bandwidth~$4$. See \cref{fig:bw4}.
\begin{figure}[tbh]
  \centering
  \includegraphics{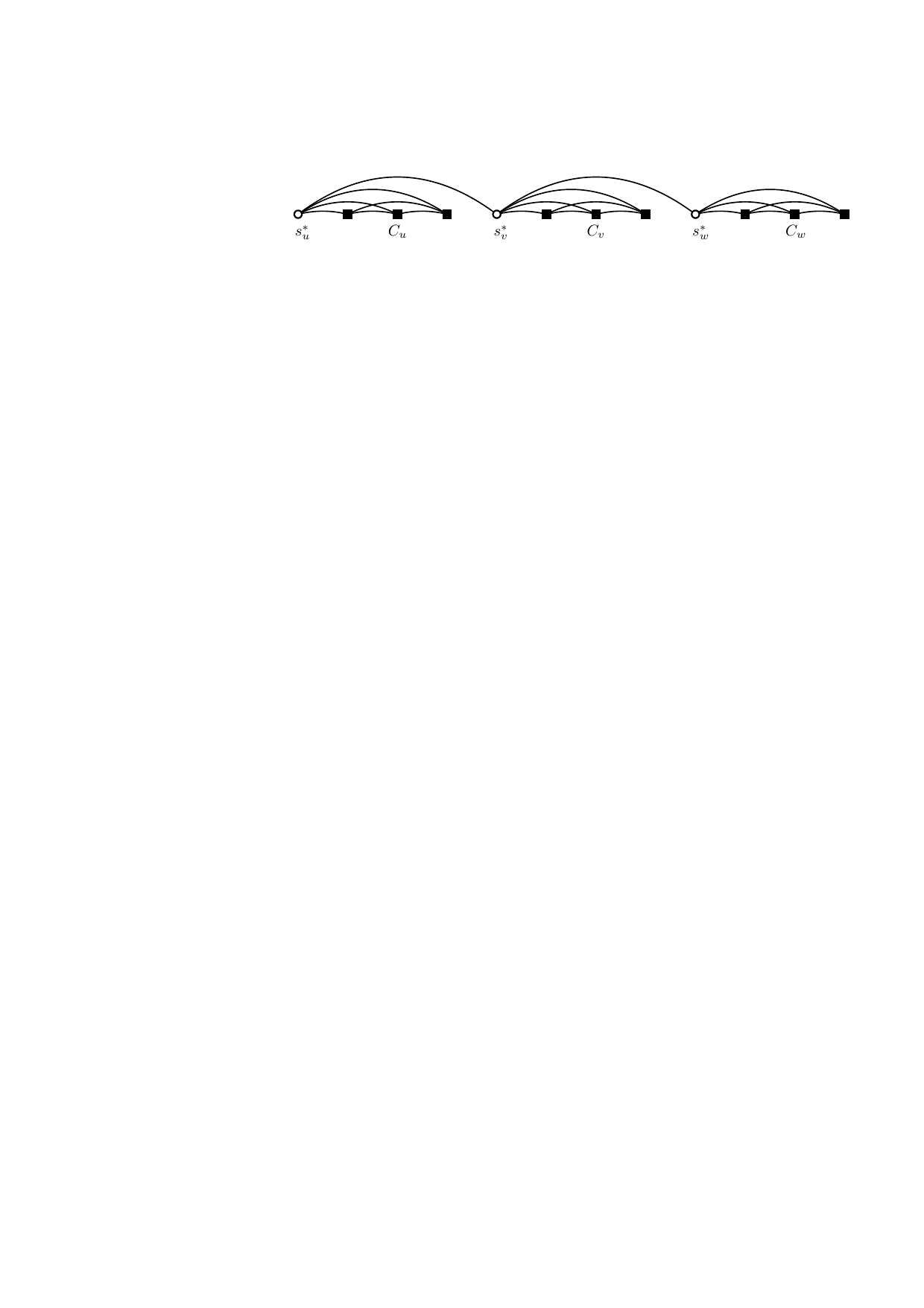}
  \caption{The embedding of $H'$ in \cref{cor:bw}.}
  \label{fig:bw4}
\end{figure}
\end{proof}



\section{Fixed-parameter tractability with $k$ plus structural parameters}
\label{sec:fpt}


\subsection{Modular-width}
Let $G = (V, E)$ be a graph.
A set $M \subseteq V$ is a \emph{module} of $G$ if $N_{G}(u) \setminus M = N_{G}(v) \setminus M$ holds for all $u,v \in M$.
In other words, each vertex $w \notin M$ is adjacent to all or no vertices of $M$.
Note that if $M$ and $M'$ are two disjoint modules of $G$, then either there are all or no possible edges between them.
Two modules $M$ and $M'$ are \emph{adjacent} if there are all possible edges between them, and \emph{nonadjacent} otherwise.

The \emph{modular-width} of $G = (V,E)$, denoted $\mw(G)$, is defined as the minimum integer $\mu$ that satisfies the following recursive condition:
\begin{itemize}
  \item $|V| \le \mu$; or
  \item $V$ can be partitioned into modules $M_{1}, \dots, M_{\mu'}$ of $G$
  such that $2 \le \mu' \le \mu$ and $\mw(G[M_{i}]) \le \mu$ for each $i \in [\mu']$.
\end{itemize}

\begin{theorem}
{\ProblemName} is fixed-parameter tractable parameterized by solution size~$k$ plus modular-width.
\end{theorem}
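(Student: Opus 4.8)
The plan is to exploit two facts in tandem: the recursive module structure witnessing $\mw(G)\le\mu$, and the observation that in a quotient geodesic intervals are built almost entirely from \emph{whole} modules. I would first fix a decomposition tree realizing $\mw(G)\le\mu$: every internal node is a module $M$ that is partitioned into $t\le\mu$ child modules $M_1,\dots,M_t$, and I record the quotient graph $Q_M$ on $[t]$ (with $a\sim b$ iff $M_a,M_b$ are adjacent); leaves are modules of size at most $\mu$.

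The technical core is a lemma computing $I_G[u,v]$ from the quotient. Let $M$ be the least common ancestor of a terminal pair $\{u,v\}$ in the tree, so $u\in M_a$, $v\in M_b$ with $a\ne b$. Because every vertex outside $M$ is adjacent to all or none of $M$, any $u$--$v$ walk projects to a walk in $Q_M$, and a length-$2$ external detour is available exactly when $M$ has a neighbour outside $M$. Hence $d_G(u,v)=d_{Q_M}(a,b)$ if $M$ has no outside neighbour and $d_G(u,v)=\min(d_{Q_M}(a,b),2)$ otherwise, independently of the chosen representatives. Moreover $I_G[u,v]$ is the union of $\{u,v\}$, every child module $M_\ell$ with $\ell\notin\{a,b\}$ and $d_{Q_M}(a,\ell)+d_{Q_M}(\ell,b)=d_G(u,v)$ (taken \emph{in full}), and---when $d_G(u,v)=2$ and $M$ has an outside neighbour---the whole external neighbourhood $N_G(M)\setminus M$. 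The upshot is a clean dichotomy: a pair is hit either by selecting one of its two endpoints, or by placing a single arbitrary vertex into one of at most $\mu$ prescribed modules; and once $M$ has an outside neighbour, all distances inside $M$ are capped at $2$, so deeper intervals collapse to endpoints plus common neighbours.

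Building on this I would run a bottom-up dynamic program over the decomposition tree. For each node $M$ I keep a table indexed by the budget $\kappa\in\{0,\dots,k\}$ spent inside $M$, the occupancy bit (whether $S\cap M\ne\emptyset$, i.e.\ $\kappa\ge1$), and two boundary bits: $\mathsf{cap}(M)$, recording whether $M$ has a neighbour outside $M$ (fixed by $G$, governing the distance cap), and $\mathsf{help}(M)$, recording whether some outside neighbour of $M$ is occupied (passed down from the parent). To combine children at $M$, I enumerate the at most $2^{\mu}$ occupancy patterns of $M_1,\dots,M_t$. A pattern fixes, for each inter-child terminal pair, whether one of its prescribed intermediate modules is occupied or external help is available---in which case the pair is automatically hit---and it fixes the $\mathsf{help}$ bit handed to each child. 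The remaining inter-child pairs must be hit through an endpoint; these requirements form a vertex-cover instance on the endpoints, which I resolve by branching, while the intra-child pairs not helped externally are delegated to the children's tables.

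The main obstacle is controlling the running time, because the endpoint choices for residual inter-child pairs couple endpoints living in different children, and tabulating over all possible sets of forced endpoints would be exponential in $|T|$. The resolution I would use is that every forced endpoint is a genuine solution vertex, so across the \emph{entire} recursion at most $k$ endpoints are ever forced; handling them by one global vertex-cover-style branching---repeatedly pick an uncovered residual pair and branch on its two endpoints---costs only a $2^{k}$ factor overall (the depth is bounded by the number of forced vertices, not by the number of tree nodes), and each forced vertex is simply pre-selected in its child, reducing that child's budget and setting its occupancy. The per-node occupancy enumeration contributes a $2^{\mu}$ factor inside each table computation, the budget index ranges over $\{0,\dots,k\}$, and the leaves have at most $\mu$ vertices and are solved by brute force; together with the polynomial-size decomposition tree this yields an $f(k,\mu)\cdot\operatorname{poly}(n)$ algorithm. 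Since neighbourhood diversity is the special case in which the decomposition has a single non-trivial level, this also recovers the known fixed-parameter tractability for $k$ plus neighbourhood diversity.
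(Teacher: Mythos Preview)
Your interval lemma is essentially correct and matches what the paper uses implicitly, but the algorithmic part has a genuine gap, and the paper's route is in fact much simpler than yours.

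The paper never recurses through the decomposition tree. It takes only a \emph{single} top-level partition of $V$ into at most $\mu$ modules and guesses which of these modules the solution meets ($2^{\mu}$ branches), crucially assuming this set $\mathcal{M}_{S}$ is of \emph{maximum} cardinality among all optimal solutions. After the guess, every surviving set $J(u,v)$ is either contained in $\{u,v\}$ (for adjacent pairs and for pairs in distinct modules) or lies entirely inside a single module $M_{i}$ (for non-adjacent same-module pairs, which all have distance~$2$). One round of minimal-vertex-cover enumeration over the size-$\le 2$ sets costs $2^{k}$. For the remaining same-module families the paper does \emph{not} recurse; instead it argues by exchange: if some $M_{i}$ needed two solution vertices, one of them could be moved to an adjacent module $M_{j}$ (any vertex there is a common neighbour of every non-adjacent pair inside $M_{i}$), strictly enlarging $\mathcal{M}_{S}$ and contradicting maximality. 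Hence one vertex per residual module suffices, and the whole algorithm runs in $O^{*}(2^{\mu+k})$.

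The gap in your scheme is the interaction between the per-node $2^{\mu}$ occupancy enumeration and the ``global'' $2^{k}$ endpoint branching. Which inter-child pairs are residual at a node depends on the occupancy pattern chosen there, so you cannot run a single vertex-cover branching on residual pairs before the patterns are fixed; but fixing all patterns along a root-to-leaf path costs $2^{\mu\cdot(\text{depth})}$, and the decomposition tree can have depth $\Theta(n)$ even when $\mu=2$ (take a caterpillar cotree). Conversely, if you interleave---enumerate a pattern at $M$, find a residual pair, branch on an endpoint $u\in M_{a}$, then look up the child's table---your DP index $(\kappa,\text{occ},\mathsf{help})$ does not record \emph{which} vertex of $M_{a}$ was forced, and this matters: $u$ may hit specific adjacent pairs deeper inside $M_{a}$ that an arbitrary occupant of $M_{a}$ would not. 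Augmenting the DP index by the set of forced terminals inside the module destroys the FPT bound. In short, the ``global $2^{k}$'' resolution you sketch does not compose with a bottom-up DP indexed only by boundary bits; the maximality exchange argument is precisely the missing idea that lets the paper avoid recursion altogether.
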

\begin{proof}
Let $\langle G = (V, E), T, k \rangle$ be an instance of {\ProblemNameShort}\@.
Observe that the vertex set of each connected component $C$ of $G$ is a module of $G$, and thus $\mw(C) \le \mw(G)$.
Hence, we may assume that $G$ is connected; otherwise we solve each connected component of $G$ independently for all $k' \le k$
and combine the answers for them.

Let $M_{1}, \dots, M_{p}$ be a partition of $V$ into modules of $G$ with $2 \le p \le \mw(G)$.
It is known that such a partition can be computed in linear time~\cite{McConnellS99}.
Note that, although the definition of modular-width requires a complete recursive structure,
our algorithm described below only needs a partition at the very first level.

For each $\{u,v\} \in T$, we take a set $J(u,v)$ and initialize it by $I_{G}[u,v]$.
Now we set $\mathcal{I} = \{J(u,v) \mid \{u,v\} \in T\}$.
Our goal is to find a hitting set $S$ of $\mathcal{I}$ with size at most $k$.
To this end, we make two phases of branching that make the rest of the problem almost trivially polynomial-time solvable.
Although we do not explicitly state the trivial termination conditions in each step,
we stop and reject a branch if it results in $k < 0$ or $\emptyset \in \mathcal{I}$.

\proofsubparagraph{The first branching.}
We first branch on the $2^{p}$ candidates of the set of modules that $S$ intersects, which we call $\mathcal{M}_{S}$.
We assume that $S$ is chosen in such a way that $|\mathcal{M}_{S}|$ is maximized.
According to $\mathcal{M}_{S}$, we update $\mathcal{I}$ as follows.
\begin{itemize}
  \item Add each $M_{i} \in \mathcal{M}_{S}$ into $\mathcal{I}$.

  \item For each $\{u,v\} \in T$, if there is $M_{i} \in \mathcal{M}_{S}$ with $M_{i} \subseteq J(u,v)$, then remove $J(u,v)$ from $\mathcal{I}$.

  \item For each $\{u,v\} \in T$ and $M_{i} \notin \mathcal{M}_{S}$, update $J(u,v)$ as $J(u,v) \coloneqq J(u,v) \setminus M_{i}$.
\end{itemize}

We now see that, for every $\{u, v\} \in T$, 
if $u$ and $v$ belong to different modules, say $M_{i}$ and $M_{j}$, then $J(u,v) \subseteq \{u,v\}$ after the update above.
Observe that $I_{G}[u,v] = \{u,v\} \cup \bigcup_{M \in \mathcal{M}} M$ for some $\mathcal{M} \subseteq \{M_{1}, \dots, M_{p}\} \setminus \{M_{i},M_{j}\}$.
Since $J(u,v) \in \mathcal{I}$, we have $\mathcal{M} \cap \mathcal{M}_{S} = \emptyset$, and thus $J(u,v) \subseteq \{u,v\}$.

\proofsubparagraph{The second branching.}

Let $\mathcal{I}'$ be the subset of $\mathcal{I}$ consisting of the sets $J(u,v)$ such that $\{u,v \} \in E$ or $u$ and $v$ belong to different modules.
Observe that every hitting set $S$ of $\mathcal{I}$ with size at most $k$ contains a \emph{minimal} hitting set $S'$ of $\mathcal{I}'$ with size at most $k$.
We list all possible candidates for $S'$ and branch on them.
Since each member of $\mathcal{I}'$ has size at most~$2$, 
we can use the algorithm of Damaschke~\cite{Damaschke06} for enumerating all minimal vertex covers of size at most~$k$
in $O(|\mathcal{I}'| + k^{2} 2^{k})$ time. It gives us all (at most $2^{k}$) minimal hitting sets of $\mathcal{I}'$ with size at most $k$ (or answers that there is no such a set).
After we pick one candidate $S'$, we decrease $k$ by $|S'|$ and update $\mathcal{I}$ by removing all elements hit by $S'$.
In particular, we remove all $J(u,v) \in \mathcal{I}'$ from $\mathcal{I}$.

\proofsubparagraph{Solving the reduced instance in polynomial time.}
We first show that, at this point, each member of $\mathcal{I}$ is a subset of some module $M_{i}$.
This is trivial for the members of $\mathcal{M}_{S}$.
For $J(u,v) \in \mathcal{I}$, we have $\{u,v\} \notin E$ and there is a module $M_{i}$ such that $u, v \in M_{i}$.
This implies that $\dist_{G}(u,v) = 2$ since $M_{i}$ has an adjacent module by the connectivity of~$G$.
Thus, $I_{G}[u,v]$ consists of the modules adjacent to $M_{i}$ and a subset of $M_{i}$.
Since $J(u,v) \in \mathcal{I}$ now, the adjacent modules do not belong to $\mathcal{M}_{S}$.
This implies that $J(u,v) \subseteq M_{i}$.

For $i \in [p]$, let $\mathcal{I}_{i}$ be the (possibly empty) subset of $\mathcal{I}$ consisting of the members that are subsets of $M_{i}$.
Since $\mathcal{I}_{1}, \dots, \mathcal{I}_{p}$ form a partition of $\mathcal{I}$ with disjoint universes, we can handle them separately.
That is, if we know a minimum hitting set $S_{i}$  of $\mathcal{I}_{i}$ for each $i \in [p]$,
then we can set $S = S' \cup \bigcup_{i \in [p]} S_{i}$.

We now claim that, under the assumption that $\mathcal{M}_{S}$ and $S'$ are correct ones, each nonempty $\mathcal{I}_{i}$ admits a hitting set of size~$1$.
Suppose to the contrary that a minimum hitting set $S_{i}$ of $\mathcal{I}_{i}$ has size at least $2$ for some $i \in [p]$.
Let $M_{j}$ be a module adjacent to $M_{i}$, and let $v_{i} \in S_{i}$ and $v_{j} \in M_{j}$.
Let $S'_{i} = (S_{i} \setminus \{v_{i}\}) \cup \{v_{j}\}$ and $R = (S \setminus S_{i}) \cup S'_{i}$.
Observe that $S'_{i}$ hits $\mathcal{I}_{i}$: it contains at least one vertex of $M_{i}$, which hits $M_{i}$;
and $v_{j} \in M_{j}$ hits all $J(u,v) \in \mathcal{I}_{i}$ since $\dist_{G}(u,v) = 2$.
This implies that $R$ interrupts all pairs in $T$ as $S$ does.
However, this contradicts the assumption on $S$ since $R$ intersects one more module ($M_{j}$) than $S$.
Thus, we can conclude that at least one of $\mathcal{M}_{S}$ and $S'$ is not correct.

Given the observation above, we can solve the rest of the current branch in polynomial time as follow.
If there are at most $k$ indices $i \in [p]$ such that $\mathcal{I}_{i}$ is nonempty
and each of them admits a hitting set of size~$1$, then return yes.
Otherwise, we reject the current branch.

\proofsubparagraph{Total running time.}
We first branch on $2^{p}$ cases.
In each case, we take $O^{*}(2^{k})$ time to enumerate at most $2^{k}$ candidates for the next branch.
We branch on these candidates and take polynomial time in each of them.
In total, the running time is $O^{*}(2^{p+k})$.
\end{proof}


\subsection{Vertex integrity and vertex multiway cut}
\label{sec:fpt-mwc-vi}
To show that {\WProblemNameShort} is fixed-parameter tractable parameterized by~$k$ plus vertex integrity
and by the minimum vertex multiway-cut size of terminal vertices $V(T)$,
we first give a single algorithm that applies to a more general setting that includes both cases
under the assumption that a certain kind of a separator is given as part of the input.
We then present simple observations of how the requirement that the separator is given as part of the input can be dropped for our cases.

\subsubsection{The main algorithm}

Let us first give an overview of the main algorithm.
It can be seen as a generalization of the one parameterized by vertex cover number
due to Aravind and Saxena~\cite{AravindS24}.
Instead of a vertex cover, we use a small set $Z \subseteq V$ such that 
each connected component of $G - Z$ contains a small number of terminal vertices.
Given $Z$, we guess how our solution $S$ interrupts pairs in $\binom{Z}{2}$.
To reflect the guess, we then update the family $\mathcal{I}$ of vertex sets to be hit by the solution 
(which was originally the family of geodesic intervals for $T$), in an appropriate way.
We show that, after this update, each member of $\mathcal{I}$ is either a geodesic interval between two vertices in $Z$
or a vertex set intersecting at most two connected components of $G-Z$.
We then show that although each connected component of $G-Z$ may have unbounded size,
we can focus on a bounded number of ``representatives'' in each connected component.
This allows us to shrink the members of $\mathcal{I}$ and then 
apply the sunflower lemma to a subfamily of $\mathcal{I}$.
As a result, we get an upper bound of $|\mathcal{I}|$, and thus the rest becomes easy to solve.

\begin{theorem}
\label{thm:whgi-p-q-sep}
Consider a special case of  {\WProblemNameShort} where, in addition to $\langle G = (V,E), T, k\rangle$,
we are given a set $Z \subseteq V$ such that $|Z| \le p$ and each connected component of $G - Z$ contains at most $q$ terminal vertices.
There is a fixed-parameter algorithm for this problem parameterized by $k+p+q$.
\end{theorem}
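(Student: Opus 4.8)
The plan is to treat the instance as \textsc{Hitting Set} on the family $\mathcal{I}=\{I_{G}[u,v]\mid\{u,v\}\in T\}$ and to drive the number of sets down to a function of $k+p+q$; once this is achieved, I invoke the fixed-parameter tractability of \textsc{Hitting Set} parameterized by the number of sets (equivalently, by the universe size). The separator $Z$ is used to localize a solution: since $|Z|\le p$ and each component of $G-Z$ contains at most $q$ terminals, the only genuinely unbounded combinatorics sits inside the components, and I aim to collapse it after first fixing how a solution meets $Z$.

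\textbf{Two guesses and the structural dichotomy.} First I would guess $S\cap Z$ (at most $2^{p}$ branches), decrease $k$ accordingly, discard every pair already interrupted, and replace each surviving $J(u,v)$ by $I_{G}[u,v]\setminus Z$, since the remaining solution $S'=S\setminus Z$ avoids $Z$; a branch with $\emptyset\in\mathcal{I}$ is rejected. Second, I would guess the set $Y\subseteq\binom{Z}{2}$ of $Z$--$Z$ pairs that $S'$ interrupts (at most $2^{\binom{p}{2}}$ branches) and add $I_{G}[z,z']\setminus Z$ to $\mathcal{I}$ for each $\{z,z'\}\in Y$. The key observation is that if a vertex $s\in S'$ lies in a component $C$ containing neither endpoint of a pair $\{u,v\}$ and $s\in I_{G}[u,v]$, then taking the portals $z_{1},z_{2}\in Z$ at which a shortest $u$--$v$ path through $s$ enters and leaves $C$, the subpath $z_{1}\leadsto s\leadsto z_{2}$ is shortest, so $s\in I_{G}[z_{1},z_{2}]$ and $\dist_{G}(u,v)=\dist_{G}(u,z_{1})+\dist_{G}(z_{1},z_{2})+\dist_{G}(z_{2},v)$; that is, $\{u,v\}$ \emph{factors through} $\{z_{1},z_{2}\}\in Y$. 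I would therefore delete every terminal pair that factors through some pair of $Y$ (such a pair is hit automatically once the corresponding $Z$--$Z$ interval is hit), and restrict each surviving $J(u,v)$ to $I_{G}[u,v]\cap(C_{u}\cup C_{v})$, where $C_{u},C_{v}$ are the components of $u,v$. This yields the promised dichotomy: every member of $\mathcal{I}$ is either a $Z$--$Z$ interval (at most $\binom{p}{2}$ of them) or a set meeting at most two components of $G-Z$.

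\textbf{Representatives, sunflower, and solving.} It remains to bound the number of two-component sets. Inside a component $C$, whether a vertex $w$ belongs to a relevant interval $I_{G}[u,v]$ with $u\in C$ is governed entirely by $\dist_{G}(u,w)$ and $\min_{z\in Z}\bigl(\dist_{G}(w,z)+\dist_{G}(z,v)\bigr)$, hence only by the distances from $w$ to the at most $q$ terminals of $C$ and the at most $p$ portals in $Z$. I would use this to partition $C$ into a bounded number of classes of vertices lying in exactly the same sets, keep one representative per class, and intersect every two-component set with the chosen representatives; this preserves the optimum while shrinking each surviving set to size at most some $h(p,q)$. The family then becomes an $h(p,q)$-\textsc{Hitting Set} instance, to which the sunflower lemma applies, reducing the number of sets to $h(p,q)!\cdot(k+1)^{h(p,q)}$, a function of $k,p,q$; I finish by solving this bounded instance via \textsc{Hitting Set} parameterized by the number of sets. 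Summing over the $2^{O(p^{2})}$ branches keeps the whole procedure fixed-parameter in $k+p+q$.

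\textbf{Main obstacle.} The hardest step is the representative argument: converting ``membership depends only on $O(p+q)$ anchor distances'' into an actual bound on the number of equivalence classes per component, since the raw distance values are unbounded and the naive distance-profile space is infinite. I expect this to require exploiting the shortest-path DAG structure directly --- arguing that the minimizer sets $\{w\in C:\dist_{G}(u,w)+\dist_{G}(w,v)=\dist_{G}(u,v)\}$ realize only boundedly many distinct patterns once indexed by the anchors --- with extra care in the same-component subcase and in the boundary case where a terminal coincides with a portal of $Z$. Establishing soundness and completeness of the factor-through deletion (that it neither discards valid solutions nor admits spurious ones) is the second delicate point, but it follows from the bracketing-portal identity above via a short triangle-inequality computation.
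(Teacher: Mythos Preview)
Your plan matches the paper's proof step for step: guess $S\cap Z$, guess $Y\subseteq\binom{Z}{2}$, use the portal argument to confine each surviving $J(u,v)$ to $C_u\cup C_v$, pass to representatives, apply the sunflower lemma, and solve the resulting bounded \textsc{Hitting Set} instance. The factor-through deletion and the restriction to $C_u\cup C_v$ are sound and complete for exactly the reason you sketch.

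What you flag as the main obstacle is not one, and you already name its resolution. Drop the distance-profile view (the profile space is indeed infinite) and use the membership-pattern view directly: for a component $C$, declare $x,y\in V(C)$ locally equivalent if $x\in I_G[a,b]\Leftrightarrow y\in I_G[a,b]$ for every anchor pair $\{a,b\}\in\binom{Z\cup(V(T)\cap V(C))}{2}$. There are at most $\binom{p+q}{2}$ anchor pairs, hence at most $2^{\binom{p+q}{2}}$ local classes per component --- this is your $h(p,q)$ up to a factor of two. The only thing left to prove is that local equivalence implies global $T$-equivalence, and this is a three-line rerouting: if $x$ lies on a shortest $u$--$v$ path $P$, choose anchors $a'$ (equal to $u$ if $u\in V(C)$, otherwise any vertex of $V(P)\cap Z$ on the $u$-side of $x$) and $b'$ symmetrically; then $x\in I_G[a',b']$, hence $y\in I_G[a',b']$ by local equivalence, and replacing the $a'$--$b'$ segment of $P$ by a shortest $a'$--$b'$ path through $y$ yields a shortest $u$--$v$ path through $y$. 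No shortest-path-DAG analysis and no separate same-component case split are needed. The boundary case of a terminal lying in $Z$ is handled by a one-line preprocessing in the paper (hang a pendant vertex off each such terminal and transfer the terminal role to it), after which $Z\cap V(T)=\emptyset$ and the anchor set has size at most $p+q$ as claimed.
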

\begin{proof}
For simplicity, we assume that $Z$ does not include any terminal vertex.
This assumption can be justified by the following simple modification:
if a vertex $z \in Z$ is involved in terminal pairs,
then add a new vertex $z'$ adjacent only to $z$ and replace all terminal pairs $\{z, x\}$ with $\{z', x\}$;
this modification is safe as any path to $z'$ must go through $z$.
We also assume that our solution does not use any vertex in $Z$.
For this assumption, we guess from at most $2^{p}$ candidates the subset $Z'$ of $Z$ that actually is contained in the solution,
remove the terminal pairs interrupted by $Z'$, and reduce $k$ by $|Z'|$.

Let $u, v \in V \setminus Z$ and $C_{u}$ and $C_{v}$ be the connected components of $G - Z$ that contain $u$ and $v$, respectively.
Note that $C_{u}$ and $C_{v}$ are not necessarily different. We now show that
\begin{equation}
  I_{G}[u,v] \setminus \left(Z \cup \textstyle\bigcup_{\{z, z'\} \in \binom{Z}{2}} I_{G}[z,z']\right) \subseteq V(C_{u}) \cup V(C_{v}).
  \label{eq:vi-sep}
\end{equation}
Let $P$ be a shortest $u$--$v$ path.
To show \eqref{eq:vi-sep}, it suffices to show that either $V(P) \subseteq V(C_{u}) \cup V(C_{v})$ or
$V(P) \setminus V(P') \subseteq V(C_{u}) \cup V(C_{v})$ for some shortest path $P'$ connecting (possibly the same) vertices $z, z' \in Z$.
If $P$ does not visit any vertex in $Z$, then $V(P) \subseteq V(C_{u})$ holds.
Assume that $P$ visits some vertices in $Z$.
Let $P_{Z}$ be the maximal subpath of $P$ such that the first and last vertices belong to $Z$. 
Note that $P_{Z}$ is a shortest path.
Let $P_{u}$ and $P_{v}$ be the paths containing $u$ and $v$, respectively, obtained from $P$ by removing $V(P_{Z})$.
Since $Z$ separates $C_{u}$ from the rest of the graph, $V(P_{u}) \subseteq V(C_{u})$ holds.
By the same argument, we can see that $V(P_{v}) \subseteq V(C_{v})$.
Hence, $V(P) \setminus V(P_{Z}) \subseteq V(C_{u}) \cup V(C_{v})$ holds.
This implies~\eqref{eq:vi-sep}.

\proofsubparagraph{Branching.}
For each $\{u,v\} \in T$, we take a set $J(u,v)$ and initialize it by $I_{G}[u,v] \setminus Z$.
(Recall that we already guessed and preprocessed the intersection of $Z$ and the solution.)
Now we set $\mathcal{I} = \{J(u,v) \mid \{u,v\} \in T\}$.
We branch on the $2^{\binom{|Z|}{2}}$ candidates of the subset of $\binom{Z}{2}$ that $S$ interrupts, which we call $Y$.
That is, we assume that we are looking for a solution $S$ such that $S \cap I_{G}[u,v] \ne \emptyset$ for each $\{u,v\} \in Y$, 
and $S \cap I_{G}[u,v] = \emptyset$ for each $\{u,v\} \in \binom{Z}{2} \setminus Y$.
Thus, we update $\mathcal{I}$ as follows.
\begin{itemize}
  \item For each $\{u,v\} \in Y$, let $J(u,v) = I_{G}[u,v] \setminus Z$ and add $J(u,v)$ into $\mathcal{I}$.

  \item For each $\{u,v\} \in T$ and $\{u',v'\} \in Y$,
  if $J(u',v') \subseteq J(u, v)$, then remove $J(u,v)$ from $\mathcal{I}$.

  \item For each $\{u,v\} \in T$ and $\{u',v'\} \in \binom{Z}{2} \setminus Y$,
  if $J(u',v') \subseteq J(u, v)$,
  then update $J(u,v)$ as $J(u,v) \coloneqq J(u,v) \setminus J(u',v')$.
\end{itemize}
Let $\mathcal{I}_{T} = \{J(u,v) \in \mathcal{I} \mid \{u,v\} \in T\}$ and 
$\mathcal{I}_{Y} = \{J(u,v) \in \mathcal{I} \mid \{u,v\} \in Y\}$.
The condition~\eqref{eq:vi-sep} implies that,
after the update above, $J(u,v) \subseteq V(C_{u}) \cup V(C_{v})$ holds for each $J(u,v) \in \mathcal{I}_{T}$.

\proofsubparagraph{Reducing the size of each set in $\mathcal{I}_{T}$.}
Two vertices $u$ and $v$ are \emph{$T$-equivalent} if $u$ and $v$ interrupt the same set of pairs in $T$.
In other words, $u$ and $v$ are $T$-equivalent if, for each $\{x,y\} \in T$, 
either $\{u,v\} \subseteq I_{G}[x,y]$ or $\{u,v\} \cap I_{G}[x,y] = \emptyset$ holds.
Observe that the $T$-equivalence among vertices is an equivalence relation.
Every inclusion-wise minimal solution for $\langle G, T, k \rangle$ includes at most one vertex of a $T$-equivalence class,
and if it includes one vertex of a $T$-equivalence class, then taking any of them is equivalent.
The partition of the vertices to the $T$-equivalence classes can be computed in polynomial time by first computing all-pairs shortest distance in $G$,
and then iteratively refining the partition of the vertices with respect to the interruption relation to each pair $\{u,v\} \in T$.

Let $V'$ be a subset of $V$ constructed by picking one vertex from each $T$-equivalence class.
Using $V'$, we further update $\mathcal{I}$ by compressing each $J(u,v) \in \mathcal{I}$ as $J(u,v) \coloneqq J(u,v) \cap V'$.
We now show that this update allows us to upper-bound the size of each $J(u,v) \in \mathcal{I}_{T}$ by a function depending only on $p+q$.
To this end, we prove the following claim.

\begin{claim}
\label{clm:component-classes}
Let $C$ be a connected component of $G - Z$.
If two vertices $x, y \in V(C)$ are $\binom{Z \cup (V(T) \cap V(C))}{2}$-equivalent,
then they are $T$-equivalent.
\end{claim}
\begin{claimproof}
It suffices to show that if one of $x$ and $y$ interrupts a pair $\{a,b\} \in T$, then the other also interrupts $\{a,b\}$.
By symmetry, we may assume that $x$ interrupts $\{a,b\}$.

Let $P_{a x}$ be a shortest $a$--$x$ path and $P_{x b}$ a shortest $x$--$b$ path.
Since $x$ interrupts $\{a,b\}$, the concatenation $P_{a b}^{(x)} \coloneqq P_{a x} P_{x b}$ is a shortest $a$--$b$ path.
Now we pick two vertices $a' \in V(P_{a x})$ and $b' \in V(P_{x b})$ as follows:
we set $a' = a$ if $a \in V(C)$; otherwise, we set $a'$ to an arbitrary vertex in $V(P_{a x}) \cap Z$
(such a vertex exists as $Z$ separates $C$ from the rest of the graph);
analogously, we set $b' = b$ if $b \in V(C)$; otherwise, we set $b'$ to an arbitrary vertex in $V(P_{x b}) \cap Z$.
By the definition, $\{a', b'\} \subseteq Z \cup (V(T) \cap V(C))$ holds.

Let $P_{a' b'}^{(x)}$ be the subpath of $P_{a b}^{(x)}$ from $a'$ to $b'$. 
Since $P_{a' b'}^{(x)}$ is a shortest $a'$--$b'$ path containing $x$, $x$ interrupts $\{a', b'\}$,
and thus so does $y$ by their $\binom{Z \cup (V(T) \cap V(C))}{2}$-equivalence.
Thus, there is a shortest $a'$--$b'$ path $P_{a' b'}^{(y)}$ that passes through $y$.
This implies that we can obtain a shortest $a$--$b$ path that visits $y$
by replacing $P_{a' b'}^{(x)}$ in $P_{a b}^{(x)}$ with $P_{a' b'}^{(y)}$.
Hence, $y$ interrupts $\{a,b\}$.
\end{claimproof}

For every connected component $C$ of $G-Z$, the assumption on $Z$ implies that $|Z \cup (V(T) \cap V(C))| \le p+q$. 
Thus, \cref{clm:component-classes} implies that $C$ intersects at most $2^{\binom{|Z \cup (V(T) \cap V(C))|}{2}} \le 2^{\binom{p+q}{2}}$ $T$-equivalence classes,
and thus, $|V' \cap V(C)| \le 2^{\binom{p+q}{2}}$.
This implies that $|J(u,v)| \le 2^{\binom{p+q}{2} + 1}$ for each $J(u,v) \in \mathcal{I}_{T}$
since $J(u,v) \subseteq V' \cap (V(C_{u}) \cup V(C_{v}))$.

\proofsubparagraph{Reducing the size of $\mathcal{I}$.}
Now we shrink the size of $\mathcal{I}_{T}$ by a well-known application of the sunflower lemma~\cite{ErdosR1960} to $d$-\textsc{Hitting Set}, 
a variant of \textsc{Hitting Set} with each set having size at most $d$.
There is a polynomial-time algorithm based on the sunflower lemma that, given an instance $\mathcal{H}$ of $d$-\textsc{Hitting Set}, either 
\begin{itemize}
  \item answers that $\mathcal{H}$ admits no hitting set of size at most~$k$, or
  \item returns an instance $\mathcal{H}'$ of $d$-\textsc{Hitting Set} with at most $k^{d} \cdot d!$ sets such that 
  $\mathcal{H}'$ and $\mathcal{H}$ admit exactly the same (possibly empty) family of hitting sets of size at most~$k$.
\end{itemize}
See \cite[Section~9.1]{FlumG06} or \cite[Theorem~2.26]{CyganFKLMPPS15}.
By setting $d = 2^{\binom{p+q}{2} + 1}$, we apply this algorithm to $\mathcal{I}_{T}$ and obtain $\mathcal{I}_{T}'$ 
with at most $k^{d} \cdot d!$ sets (or find that it is a no-instance).
Finally, we set $\mathcal{I}' = \mathcal{I}_{T}' \cup \mathcal{I}_{Y}$.
The discussion so far guarantees that $\mathcal{I}'$ and $\mathcal{I}$ are equivalent.
Since $|\mathcal{I}'|$ ($= |\mathcal{I}_{T}'| + |\mathcal{I}_{Y}| \le k^{d} \cdot d! + p^{2}$) depends only on $k + p + q$,
we can find a minimum hitting set of $\mathcal{I'}$
by applying the standard fixed-parameter algorithm for \textsc{Hitting Set} parameterized by the number of sets (see \cref{sec:known-results}).
\end{proof}

\subsubsection{Applications of the main algorithm}
\label{sssec:app-main-alg}
\cref{thm:whgi-p-q-sep} requires a separator $Z$ with special properties as part of the input.
A natural question would be whether the original problem {\WProblemNameShort} is fixed-parameter tractable parameterized by $k + p + q$.
To this end, it suffices to show that finding $Z$ is fixed-parameter tractable parameterized by $p + q$.
Although we leave this general problem unsettled, we now observe that for our special cases finding $Z$ is fixed-parameter tractable.

The \emph{vertex integrity} of a graph $G = (V,E)$, denoted $\vi(G)$, is the minimum integer $\iota$ such that 
there is a set $X \subseteq V$ satisfying that $\iota = |X| + \max_{C \in \cc(G-X)} |V(C)|$,
where $\cc(G-X)$ is the set of connected components of $G-X$.
For an $n$-vertex graph $G$ with $\vi(G) = \iota$, 
finding a set $X \subseteq$ satisfying $\iota = |X| + \max_{C \in \cc(G-X)} |V(C)|$ can be done in time
$O(\iota^{\iota+1} n)$~\cite{DrangeDH16}.
Observe that such a set $X$ satisfies the conditions of $Z$ in \cref{thm:whgi-p-q-sep} with $p = q = \iota$.
Thus we get the following result as a direct corollary.
\begin{corollary}
\label{cor:vi}
{\WProblemName} is fixed-parameter tractable parameterized by solution size~$k$ plus vertex integrity. 
\end{corollary}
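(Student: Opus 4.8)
The plan is to derive this directly from \cref{thm:whgi-p-q-sep} by using an optimal vertex-integrity decomposition as the required separator. First, I would compute a set $X \subseteq V$ witnessing $\vi(G) = \iota$, i.e., a set satisfying $\iota = |X| + \max_{C \in \cc(G-X)} |V(C)|$. As already noted in the preceding discussion, such a set can be found in time $O(\iota^{\iota+1} n)$ by the algorithm of \cite{DrangeDH16}, which is fixed-parameter tractable in $\iota \le k + \vi(G)$.

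The second step is to verify that $X$ meets the requirements placed on the separator $Z$ in \cref{thm:whgi-p-q-sep}. From the defining equation $\iota = |X| + \max_{C} |V(C)|$ we immediately obtain $|X| \le \iota$, so we may set $p = \iota$. Likewise, every connected component $C$ of $G - X$ satisfies $|V(C)| \le \iota$ and therefore contains at most $\iota$ terminal vertices, so $q = \iota$ is valid as well.

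Finally, I would invoke \cref{thm:whgi-p-q-sep} on the instance $\langle G, T, k \rangle$ together with $Z = X$ and $p = q = \iota$. This yields a fixed-parameter algorithm whose parameter $k + p + q = k + 2\iota$ is bounded by a function of $k + \vi(G)$, which establishes the claim.

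Since \cref{thm:whgi-p-q-sep} and the vertex-integrity computation together do all the real work, there is essentially no obstacle here; the only point that needs a moment's care is the inequality $q \le \iota$, which holds simply because the number of terminal vertices in a component cannot exceed its total number of vertices.
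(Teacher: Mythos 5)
Your proposal is correct and follows exactly the route of the paper: compute an optimal vertex-integrity set $X$ via the algorithm of~\cite{DrangeDH16}, observe that it satisfies the hypotheses of \cref{thm:whgi-p-q-sep} with $p = q = \iota$, and invoke that theorem. Nothing is missing.
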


For a graph $G = (V,E)$ and $U \subseteq V$, a set $X \subseteq V$ is a \emph{vertex multiway cut} for $U$
if each connected component of $G - X$ contains at most one vertex of $U$.
It is known that the problem of finding a vertex multiway cut of size $r$ is fixed-parameter tractable
parameterized by~$r$~\cite{ChenLL09,Marx06}.
A multiway cut $X$ for $V(T)$ with size $r$ satisfies the conditions of $Z$ in \cref{thm:whgi-p-q-sep} with $p = r$ and $q = 1$.
Furthermore, $X$ interrupts all pairs in $T$ as no path can connect a pair in $T$ without passing through $X$.
Thus, we can assume that $k \le r$ as otherwise we already have a desired solution (i.e., $X$).
Now we apply \cref{thm:whgi-p-q-sep} and get the following result.
\begin{corollary}
\label{cor:multiway-cut}
{\WProblemName} is fixed-parameter tractable parameterized by the minimum vertex multiway-cut size of the set of terminal vertices $V(T)$.
\end{corollary}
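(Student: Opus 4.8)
The plan is to apply the main algorithm of \cref{thm:whgi-p-q-sep} with the separator $Z$ chosen to be a \emph{minimum} vertex multiway cut for the terminal vertices $V(T)$. Let $r$ denote this minimum multiway-cut size, which is the parameter. First I would compute a vertex multiway cut $X$ for $V(T)$ with $|X| = r$: since finding a vertex multiway cut of size at most a given budget is fixed-parameter tractable in that budget~\cite{ChenLL09,Marx06}, running the decision algorithm for increasing budgets $0, 1, 2, \dots$ and stopping at the first feasible one yields such an $X$ in time FPT in $r$.

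The key observation is that $X$ serves simultaneously as the required separator and as a ready-made feasible solution. By the definition of a multiway cut, every connected component of $G - X$ contains at most one vertex of $V(T)$; in particular, for each pair $\{u,v\} \in T$ the distinct terminals $u$ and $v$ lie in different components of $G - X$, so every $u$--$v$ path (hence every shortest one) must traverse $X$. Therefore $X \cap I_{G}[u,v] \ne \emptyset$ for every $\{u,v\} \in T$, meaning $X$ interrupts all terminal pairs. This argument concerns only vertex connectivity and is thus insensitive to the edge weights, so it applies to the weighted variant as well. Consequently, if $k \ge |X| = r$ we may immediately answer yes; otherwise we may assume $k < r$.

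It then remains to run \cref{thm:whgi-p-q-sep} on the instance with $Z \coloneqq X$, which meets the hypotheses on $Z$ with $p = |X| = r$ and $q = 1$ (each component of $G - X$ holds at most one terminal vertex). The theorem solves the instance in time FPT in $k + p + q$; since $k < r$, $p = r$, and $q = 1$, this is bounded by a function of $r$ alone, giving the claimed tractability. Because all the real work---the branching over $\binom{Z}{2}$, the compression via $T$-equivalence, and the sunflower reduction---already lives inside \cref{thm:whgi-p-q-sep}, no genuinely hard step remains here; the only points needing care are the two facts underpinning the reduction, namely that a multiway cut for $V(T)$ always interrupts every terminal pair (which is what bounds $k$ by $r$) and that the minimum such cut can be computed within the parameter via the known multiway-cut algorithm.
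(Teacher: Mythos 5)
Your proposal is correct and follows essentially the same route as the paper: compute a minimum vertex multiway cut $X$ for $V(T)$ via the known FPT algorithm, note that $X$ itself interrupts every terminal pair (so one may assume $k$ is at most the cut size), and then invoke \cref{thm:whgi-p-q-sep} with $Z = X$, $p = r$, and $q = 1$. No substantive difference from the paper's argument.
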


As a final remark to this section, we note that the problem of finding $Z$ in \cref{thm:whgi-p-q-sep} is \emph{nonuniformly} fixed-parameter tractable parameterized by $p+q$.
This follows from the discussion by Marx~\cite[p.~396]{Marx06} on the case of $q = 1$ (i.e., the vertex multiway cut problem), which directly applies to general $q$ as well.
Given $G = (V, E)$ and $R \subseteq V$, we construct an edge-colored graph $G' = (V', E')$ as follows:
for each $v \in R$, add a vertex $v'$ and an edge $\{v, v'\}$;
color $E$ in black and $\{\{v, v'\} \mid v \in R\}$ in red.
Now,
there is $Z \subseteq V$ such that $|Z| \le p$ and each connected component of $G-Z$ contains at most $q$ terminal vertices
if and only if
there is $Z' \subseteq V'$ such that $|Z'| \le p$ and each connected component of $G'-Z'$ contains at most $q$ red edges.
Observe that the latter is a minor-closed property. 
More precisely, let $\mathcal{G}_{p,q}$ be the family of graphs with black and red edges such that 
each graph $H \in \mathcal{G}_{p,q}$ admits a separator $X \subseteq V(H)$
such that $|X| \le p$ and each connected component of $H - X$ contains at most $q$ red edges.
Then, $\mathcal{G}_{p,q}$ is closed under the operation of taking minors.
Thus, the edge-colored version of the graph minor theorem~(see \cite[Exercise~19.5.6]{DowneyF13})
implies that testing the membership to $\mathcal{G}_{p,q}$ is fixed-parameter tractable parameterized by $p+q$.


\section{Parameterized intractability beyond $k$}
\label{sec:para-hard}

Since {\ProblemNameShort} parameterized by~$k$ belongs to $\mathrm{W[2]}$, 
both subcases handled in this section also belong to $\mathrm{W[2]}$.
Hence, we only prove the $\mathrm{W[2]}$-hardness in each case.

For a graph $G = (V,E)$ and a set $T$ of vertex pairs, a set $X \subseteq V$ is a \emph{vertex multicut} for $T$
if no connected component of $G - X$ contains a pair in $T$.
It is known that the problem of finding a vertex multicut is fixed-parameter tractable
parameterized by its size~\cite{BousquetDT18,MarxR14}.

Observe that a vertex multiway cut for $V(T)$ is a vertex multicut for $T$, but not vice versa.
Observe also that a vertex multicut for $T$ interrupts all pairs in $T$.
Therefore, the minimum vertex \emph{multicut} size is upper-bounded by the minimum vertex \emph{multiway-cut} size,
for which {\WProblemNameShort} is fixed-parameter tractable (\cref{cor:multiway-cut}),
and lower-bounded by the solution size~$k$, for which {\ProblemNameShort} is W[2]-complete~\cite{AravindS24}.
Thus, the following result sharpens the border of (in)tractability in this parameter hierarchy.
\begin{theorem}
\label{thm:multicut}
{\ProblemName} is $\mathrm{W[2]}$-complete parameterized by
the minimum vertex multicut size of the terminal pairs~$T$.
\end{theorem}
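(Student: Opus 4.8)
The plan is to prove W[2]-hardness by a parameterized reduction from \textsc{Hitting Set} (equivalently \textsc{Set Cover} or \textsc{Dominating Set}), which is W[2]-complete parameterized by the solution size~$k$. Let $\langle U, \mathcal{F}, k \rangle$ be an instance of \textsc{Hitting Set} with $U = \{e_1, \dots, e_n\}$ and $\mathcal{F} = \{F_1, \dots, F_m\}$. The goal is to build an instance $\langle G, T, k' \rangle$ of {\ProblemNameShort} together with a small vertex multicut for~$T$, where both $k'$ and the multicut size are bounded by a function of~$k$ alone. The key design principle is that the solution~$S$ should be forced to correspond to a choice of $k$ elements of~$U$, and each terminal pair encoding a set $F_j \in \mathcal{F}$ should be interruptible exactly when~$S$ picks an element inside~$F_j$.

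First I would introduce, for each element $e_i \in U$, a vertex~$u_i$, and collect these into an element-gadget. To make the reduction respect the multicut constraint, I want a \emph{small} set~$X$ that separates all the terminal pairs; the natural candidate is a constant-size (or $O(k)$-size) set of ``hub'' vertices through which all encoded shortest paths are routed. The intended behaviour is that for each set~$F_j$, the geodesic interval of the corresponding terminal pair $\{a_j, b_j\}$ consists precisely of the element-vertices $\{u_i : e_i \in F_j\}$ (possibly together with fixed hub vertices that the solution is forbidden to use, e.g.\ by making them non-shortest-path vertices or by attaching pendant terminals that pin down the coloring/selection exactly as in the constructions of \cref{thm:path-forest,thm:triangle}). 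Interrupting $\{a_j, b_j\}$ then amounts to selecting some $u_i$ with $e_i \in F_j$, so a hitting set of size~$k$ for~$\mathcal{F}$ corresponds bijectively to an interrupting set of size $k' = k$ for~$T$.

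Next I would verify the two directions of correctness: a hitting set $H \subseteq U$ of size~$k$ yields $S = \{u_i : e_i \in H\}$ interrupting every pair, and conversely any interrupting set of size~$k'$ can be pushed (using the forced structure and the observation that only element-vertices lie on the relevant shortest paths) into one that selects only element-vertices, giving back a hitting set. The crucial bookkeeping is to ensure that the distances are set up so that $I_G[a_j, b_j]$ contains \emph{exactly} the element-vertices of~$F_j$ and no stray vertices that would allow cheating; this is where gadgets analogous to the ``local'' terminal pairs and the universal/hub vertex~$s^*$ in the earlier proofs become useful for pinning shortest-path structure.

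\textbf{The main obstacle} will be controlling the minimum vertex multicut size of~$T$ so that it is bounded by a function of the parameter~$k$ while still allowing $|T| = m$ to be large (it must be, since $m$ is part of the \textsc{Hitting Set} input and the parameter must not depend on~$m$). A vertex multicut for~$T$ need only hit \emph{one} vertex on \emph{every} $a_j$--$b_j$ \emph{connection} (not every shortest path), so I must arrange the global topology so that a few vertices simultaneously separate all $m$ terminal pairs; routing all pairs through a common bounded-size bottleneck~$X$ achieves this, but I must simultaneously guarantee that this bottleneck does \emph{not} collapse the shortest-path structure or let~$X$-vertices serve as cheap interrupters. Balancing these two requirements—small multicut versus faithful geodesic intervals—is the delicate part, and I expect the construction to interleave separator-hub vertices with distance-padding paths, checking carefully that the resulting multicut size is $O(k)$ (so that $k$ is indeed a lower bound, as remarked before the theorem) and that membership in $\mathrm{W[2]}$ from the earlier observation closes the completeness.
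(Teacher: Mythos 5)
Your high-level plan coincides with the paper's: both reduce from \textsc{Hitting Set}, represent elements by vertices, encode each set $F_j$ by terminal pairs whose geodesic intervals pass through exactly the element-vertices of $F_j$, and route everything through a small bottleneck so that the vertex multicut size is bounded by a function of $h$ alone. However, your proposal stops exactly at the point you yourself flag as ``the delicate part,'' and that part is the actual content of the proof: you never specify a mechanism that prevents a solution from cheating by buying the bottleneck vertices themselves. If all $m$ pairs are routed through a single hub (or $O(k)$ hubs), then placing the solution on those hubs interrupts every pair at once, and the reduction collapses; saying the hubs are ``forbidden to use'' or ``non-shortest-path vertices'' is not achievable, since a vertex that separates $a_j$ from $b_j$ necessarily lies on every $a_j$--$b_j$ path, hence on a shortest one. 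This is a genuine gap, not a deferred routine verification.

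The paper's resolution is a replication trick you do not mention: it takes $h+3$ interchangeable hub vertices $W = \{w_1,\dots,w_{h+3}\}$, makes each element-vertex adjacent to all of them and to the sets containing it, and pairs \emph{each} set-vertex $F_i$ with \emph{all} of $w_1,\dots,w_{h+3}$, so $T = \{\{F_i,w_j\}\}$. The set $W$ itself is then a vertex multicut of size $h+3$ (bounding the parameter), while any solution of size at most $h$ must miss some $w_p$, and since no other $w_q$ lies in $I_G[w_p,F_i]$, the pairs $\{F_i,w_p\}$ force the solution to contain, for every $i$, a neighbor $u_j$ of $F_i$ --- i.e., a hitting set. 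A second small argument (an exchange: $F_i$ interrupts only its own pairs, and any $u_j \in F_i$ interrupts at least those) rules out picking the set-vertices $F_i$. Your sketch would need both of these ingredients --- the over-provisioned hub copies beating the budget $h$, and the exchange argument for the set-side endpoints --- to become a proof; as written, the correctness of the backward direction cannot be established.
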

\begin{proof}
Let $\langle U, \mathcal{F}, h \rangle$ be an instance of \textsc{Hitting Set},
where $U = \{u_{1}, \dots, u_{n}\}$ and $\mathcal{F} = \{F_{1}, \dots, F_{m}\} \subseteq 2^{U}$.
From this instance, we construct an equivalent instance $\langle G = (V,E), T, h \rangle$ of {\ProblemNameShort} as follows (see \cref{fig:hpgi}~(left)).
Let $W = \{w_{i} \mid i \in [h+3]\}$.\footnote{%
For proving \cref{thm:multicut}, it suffices to set $|W| = h+1$ instead of $h+3$.
We use $h+3$ to make the proof of \cref{thm:hpgi} simpler.}
We set $V = \mathcal{F} \cup U \cup W$.
For $u_{i} \in U$ and $F_{j} \in \mathcal{F}$, we add $\{u_{i}, F_{j}\}$ into $E$ if and only if $u_{i} \in F_{j}$.
We then add all possible edges between $U$ and $W$.
We set $T = \{\{F_{i}, w_{j}\} \mid i \in [m], \ j \in [h+3]\}$.
We can see that $W$ is a vertex multicut of $T$ with size $h+3$.

Assume that $\langle U, \mathcal{F}, h \rangle$ is a yes-instance of \textsc{Hitting Set}.
Let $S \subseteq U$ be a hitting set of $\mathcal{F}$ with $|S| \le h$.
Since $S$ is a hitting set of $\mathcal{F}$, each $F_{i} \in \mathcal{F}$ includes a member of $S$, say $u_{j}$,
and that member $u_{j}$ interrupts the pair $\{F_{i}, w_{p}\}$ for each $p \in [h+3]$.

Assume that $\langle G, T, h \rangle$ is a yes-instance of {\ProblemNameShort}\@.
We first claim that there is a solution for $\langle G, T, h \rangle$ that does not include any vertex in $\mathcal{F}$.
To see this, observe that $F_{i}$ interrupts all pairs $\{F_{i}, w_{p}\}$ for $p \in [h+3]$ but no others,
while each $u_{j} \in F_{i}$ also interrupts them.
Let $S$ be a solution for $\langle G, T, h \rangle$ with $S \cap \mathcal{F} = \emptyset$.
Since $|S| \le h < |W|$, there is a vertex $w_{p} \in W \setminus S$.
Since no other vertex $w_{q} \in W \setminus \{w_{p}\}$ belongs to the geodesic interval $I_{G}[w_{p}, F_{i}]$ for every $i \in [m]$,
the set $S \setminus W$ ($= S \cap U$) has to interrupt $\{w_{p}, F_{i}\}$ for every $i \in [m]$.
This implies that, each $F_{i}$ has a neighbor $u_{j}$ in $S \cap U$, and thus, $S \cap U$ is a hitting set of $\mathcal{F}$.
\end{proof}

\begin{figure}[tbh]
  \centering
  \includegraphics{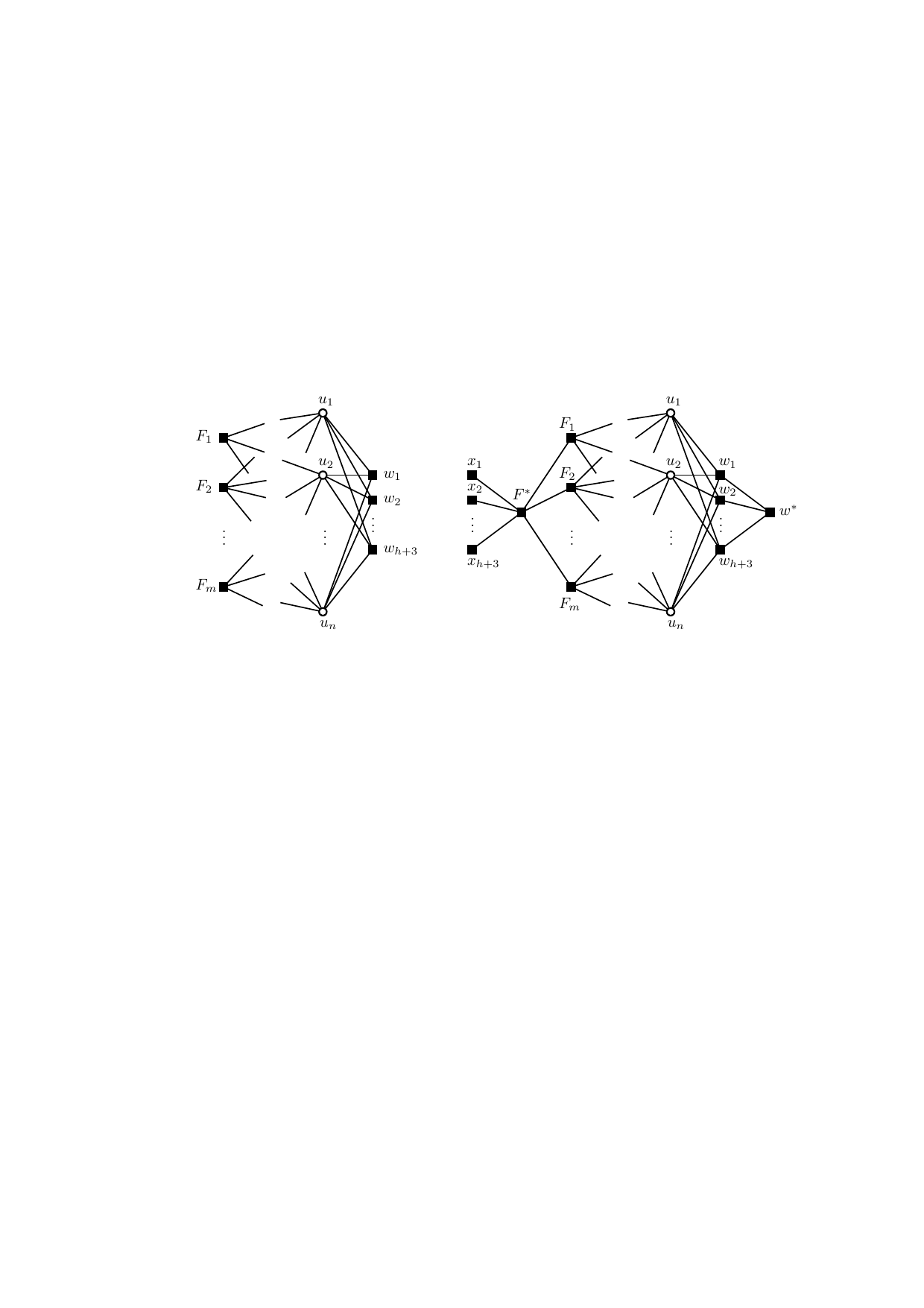}
  \caption{The graphs $G$ (left) and $H$ (right) in \cref{thm:multicut,thm:hpgi}, respectively.}
  \label{fig:hpgi}
\end{figure}

We now modify the proof of \cref{thm:multicut} and show that {\ProblemNameShort} is W[2]-complete parameterized by~$k$
even if the structure formed by the terminal pairs is quite limited, i.e., $T = \binom{Q}{2}$ for some $Q \subseteq V$.
In other words, we consider the setting where we want to hit pairwise geodesic intervals in a given vertex set $Q$.
As this variant could be of independent interest, we define it as a separate problem as follows.

\begin{tcolorbox}
\begin{description}
  \setlength{\itemsep}{0pt}
  \item[Problem:] {\PairwiseProblemName} (\PairwiseProblemNameShort)
  \item[Input:] A graph $G = (V,E)$, a set $Q \subseteq V$, and an integer $k$.
  \item[Question:] Is there $S \subseteq V$ with $|S| \le k$ such that $S \cap I_{G}[u,v] \ne \emptyset$ for each $\{u,v\} \in \binom{Q}{2}$?
\end{description}
\end{tcolorbox}

\begin{theoremrep}
\label{thm:hpgi}
{\PairwiseProblemName} is $\mathrm{W[2]}$-complete parameterized by solution size~$k$.
\end{theoremrep}
\begin{proof}
We first repeat the construction in the proof of \cref{thm:multicut}
to obtain an instance $\langle G = (V,E), T, h \rangle$ of {\ProblemNameShort}
from an instance $\langle U, \mathcal{F}, h \rangle$ of \textsc{Hitting Set}.
From $\langle G, T, h \rangle$, we construct an equivalent instance $\langle H, Q, h+2 \rangle$ of {\PairwiseProblemNameShort} (see \cref{fig:hpgi}~(right)):
to obtain $H$ from $G$, we add vertices $F^{*}$ and $w^{*}$ and a set $X = \{x_{i} \mid i \in [h+3]\}$ of vertices,
and then add all possible edges between $F^{*}$ and $X \cup \mathcal{F}$ and between $w^{*}$ and $W$;
we set $Q = V(H) \setminus U$.

Assume that $\langle G, T, h \rangle$ is a yes-instance of {\ProblemNameShort}\@.
Let $S \subseteq U$ be a solution for $\langle G, T, h \rangle$.
We set $S' = S \cup \{F^{*}, w^{*}\}$. 
We can see that $S'$ interrupts all pairs in $\binom{Q}{2}$:
$S$ interrupts all pairs in $T = \{\{F_{i}, w_{j}\} \mid i \in [m], \ j \in [h+3]\}$;
$F^{*}$ interrupts all pairs involving a vertex in $X$ and all pairs $\{F_{i}, F_{j}\}$;
and $w^{*}$ interrupts all pairs $\{w_{i}, w_{j}\}$.

Assume that $\langle H, Q, h+2 \rangle$ is a yes-instance of {\PairwiseProblemNameShort}\@.
Let $S$ be a solution for $\langle G, Q, h \rangle$.
Observe that $F^{*} \in S$ since otherwise we have to take all $h+3$ vertices of $X$ into $S$.
Similarly, we can see that $w^{*} \in S$.
Since neither $F^{*}$ nor $w^{*}$ can interrupt pairs $\{F_{i}, w_{p}\} \in T$ 
and $I_{G}[F_{i}, w_{p}] = I_{H}[F_{i}, w_{p}]$ for all $i \in [m]$ and $j \in [h+3]$,
we can conclude that $S \setminus \{F^{*}, w^{*}\}$ interrupts all pairs in $T$.
\end{proof}


\section{Concluding remarks}
\label{sec:conclusion}

In this paper, we showed that {\ProblemName} is intractable even on graphs with highly restricted structures,
while combinations of the solution size~$k$ and some structural graph parameters make the problem fixed-parameter tractable
(see \cref{fig:graph-parameters}).
As \cref{fig:graph-parameters}~(right) shows, when $k$ is part of the parameter, many cases remain unsettled.
It would be interesting to further investigate these cases.

In one of our fixed-parameter algorithms (\cref{thm:whgi-p-q-sep}),
we have a restriction that a certain kind of a separator is given as part of the input,
and then we discussed that for our target cases it is actually not a restriction (\cref{cor:vi,cor:multiway-cut}).
We wonder if requiring such a separator in the input is not a restriction in general.
More precisely, we would like to ask whether the following problem is fixed-parameter tractable parameterized by $p+q$.

\begin{tcolorbox}
\begin{description}
  \setlength{\itemsep}{0pt}
  \item[Input:] A graph $G = (V,E)$, a set of terminal vertices $R \subseteq V$, and integers $p$ and $q$.
  \item[Question:] Is there $Z \subseteq V$ with $|Z| \le p$ such that 
  each connected component $C$ of $G - Z$ contains at most $q$ terminal vertices (i.e., $|V(C) \cap R| \le q$)?
\end{description}
\end{tcolorbox}

\noindent
Note that, as we discussed in the last paragraph of \cref{sssec:app-main-alg},
the graph minor theorem gives a nonuniform fixed-parameter algorithm parameterized by $p + q$ for this problem,
while we are looking for a uniform one.

\bibliography{ref}

\providecommand{\noopsort}[1]{}
\begin{thebibliography}{10}

\bibitem{AgarwalHRS24}
Pankaj~K. Agarwal, Sariel Har{-}Peled, Rahul Raychaudhury, and Stavros Sintos.
\newblock Fast approximation algorithms for piercing boxes by points.
\newblock In {\em {SODA} 2024}, pages 4892--4908, 2024.
\newblock \href {https://doi.org/10.1137/1.9781611977912.174}
  {\path{doi:10.1137/1.9781611977912.174}}.

\bibitem{AravindS24}
N.~R. Aravind and Roopam Saxena.
\newblock The parameterized complexity of terminal monitoring set.
\newblock In {\em {WALCOM} 2024}, volume 14549 of {\em Lecture Notes in
  Computer Science}, pages 76--90, 2024.
\newblock \href {https://doi.org/10.1007/978-981-97-0566-5_7}
  {\path{doi:10.1007/978-981-97-0566-5_7}}.

\bibitem{Atici02}
Mustafa Atici.
\newblock Computational complexity of geodetic set.
\newblock {\em Int. J. Comput. Math.}, 79(5):587--591, 2002.
\newblock \href {https://doi.org/10.1080/00207160210954}
  {\path{doi:10.1080/00207160210954}}.

\bibitem{BousquetDT18}
Nicolas Bousquet, Jean Daligault, and St{\'{e}}phan Thomass{\'{e}}.
\newblock Multicut is {FPT}.
\newblock {\em {SIAM} J. Comput.}, 47(1):166--207, 2018.
\newblock \href {https://doi.org/10.1137/140961808}
  {\path{doi:10.1137/140961808}}.

\bibitem{BresarKT11}
Bo\v{s}tjan Bre\v{s}ar, Matja\v{z} Kov\v{s}e, and Aleksandra Tepeh.
\newblock Geodetic sets in graphs.
\newblock In {\em Structural Analysis of Complex Networks}, pages 197--218.
  Birkh{\"a}user, 2011.
\newblock \href {https://doi.org/10.1007/978-0-8176-4789-6_8}
  {\path{doi:10.1007/978-0-8176-4789-6_8}}.

\bibitem{ChenLL09}
Jianer Chen, Yang Liu, and Songjian Lu.
\newblock An improved parameterized algorithm for the minimum node multiway cut
  problem.
\newblock {\em Algorithmica}, 55(1):1--13, 2009.
\newblock \href {https://doi.org/10.1007/S00453-007-9130-6}
  {\path{doi:10.1007/S00453-007-9130-6}}.

\bibitem{CormenLRS22}
Thomas~H. Cormen, Charles~E. Leiserson, Ronald~L. Rivest, and Clifford Stein.
\newblock {\em Introduction to Algorithms}.
\newblock {MIT} Press, 4th edition, 2022.

\bibitem{CyganFKLMPPS15}
Marek Cygan, Fedor~V. Fomin, {\L}ukasz Kowalik, Daniel Lokshtanov, D{\'{a}}niel
  Marx, Marcin Pilipczuk, Micha{\l} Pilipczuk, and Saket Saurabh.
\newblock {\em Parameterized Algorithms}.
\newblock Springer, 2015.
\newblock \href {https://doi.org/10.1007/978-3-319-21275-3}
  {\path{doi:10.1007/978-3-319-21275-3}}.

\bibitem{Damaschke06}
Peter Damaschke.
\newblock Parameterized enumeration, transversals, and imperfect phylogeny
  reconstruction.
\newblock {\em Theor. Comput. Sci.}, 351(3):337--350, 2006.
\newblock \href {https://doi.org/10.1016/J.TCS.2005.10.004}
  {\path{doi:10.1016/J.TCS.2005.10.004}}.

\bibitem{DoughatK96}
Aaron~L. Douthat and Man~C. Kong.
\newblock Computing geodetic bases of chordal and split graphs.
\newblock {\em J. Comb. Math. Comb. Comput.}, 22:67--77, 1996.

\bibitem{DowneyF99}
Rodney~G. Downey and Michael~R. Fellows.
\newblock {\em Parameterized Complexity}.
\newblock Springer, 1999.
\newblock \href {https://doi.org/10.1007/978-1-4612-0515-9}
  {\path{doi:10.1007/978-1-4612-0515-9}}.

\bibitem{DowneyF13}
Rodney~G. Downey and Michael~R. Fellows.
\newblock {\em Fundamentals of Parameterized Complexity}.
\newblock Springer, 2013.
\newblock \href {https://doi.org/10.1007/978-1-4471-5559-1}
  {\path{doi:10.1007/978-1-4471-5559-1}}.

\bibitem{DrangeDH16}
P{\aa}l~Gr{\o}n{\aa}s Drange, Markus~S. Dregi, and Pim van~'t Hof.
\newblock On the computational complexity of vertex integrity and component
  order connectivity.
\newblock {\em Algorithmica}, 76(4):1181--1202, 2016.
\newblock \href {https://doi.org/10.1007/s00453-016-0127-x}
  {\path{doi:10.1007/s00453-016-0127-x}}.

\bibitem{ErdosR1960}
Paul Erd\H{o}s and Richard Rado.
\newblock Intersection theorems for systems of sets.
\newblock {\em J. London Math. Soc.}, 35(1):85--90, 1960.
\newblock \href {https://doi.org/10.1112/jlms/s1-35.1.85}
  {\path{doi:10.1112/jlms/s1-35.1.85}}.

\bibitem{FlumG06}
J{\"{o}}rg Flum and Martin Grohe.
\newblock {\em Parameterized Complexity Theory}.
\newblock Springer, 2006.
\newblock \href {https://doi.org/10.1007/3-540-29953-X}
  {\path{doi:10.1007/3-540-29953-X}}.

\bibitem{FowlerPT81}
Robert~J. Fowler, Mike Paterson, and Steven~L. Tanimoto.
\newblock Optimal packing and covering in the plane are {NP}-complete.
\newblock {\em Inf. Process. Lett.}, 12(3):133--137, 1981.
\newblock \href {https://doi.org/10.1016/0020-0190(81)90111-3}
  {\path{doi:10.1016/0020-0190(81)90111-3}}.

\bibitem{GareyJ79}
Michael~R. Garey and David~S. Johnson.
\newblock {\em Computers and Intractability: {A} Guide to the Theory of
  {NP}-Completeness}.
\newblock W. H. Freeman, 1979.

\bibitem{HararyLT93}
Frank Harary, Emmanuel Loukakis, and Constantine Tsouros.
\newblock The geodetic number of a graph.
\newblock {\em Math. Comput. Model.}, 17(11):89--95, 1993.
\newblock \href {https://doi.org/10.1016/0895-7177(93)90259-2}
  {\path{doi:10.1016/0895-7177(93)90259-2}}.

\bibitem{Jansen17}
Bart M.~P. Jansen.
\newblock On structural parameterizations of hitting set: Hitting paths in
  graphs using 2-{SAT}.
\newblock {\em J. Graph Algorithms Appl.}, 21(2):219--243, 2017.
\newblock \href {https://doi.org/10.7155/JGAA.00413}
  {\path{doi:10.7155/JGAA.00413}}.

\bibitem{JansenS24}
Bart M.~P. Jansen and C{\'{e}}line M.~F. Swennenhuis.
\newblock Steiner tree parameterized by multiway cut and even less.
\newblock In {\em {ESA} 2024}, volume 308 of {\em LIPIcs}, pages 76:1--76:16,
  2024.
\newblock \href {https://doi.org/10.4230/LIPICS.ESA.2024.76}
  {\path{doi:10.4230/LIPICS.ESA.2024.76}}.

\bibitem{KellerhalsK22}
Leon Kellerhals and Tomohiro Koana.
\newblock Parameterized complexity of geodetic set.
\newblock {\em J. Graph Algorithms Appl.}, 26(4):401--419, 2022.
\newblock \href {https://doi.org/10.7155/JGAA.00601}
  {\path{doi:10.7155/JGAA.00601}}.

\bibitem{Marx06}
D{\'{a}}niel Marx.
\newblock Parameterized graph separation problems.
\newblock {\em Theor. Comput. Sci.}, 351(3):394--406, 2006.
\newblock \href {https://doi.org/10.1016/J.TCS.2005.10.007}
  {\path{doi:10.1016/J.TCS.2005.10.007}}.

\bibitem{MarxR14}
D{\'{a}}niel Marx and Igor Razgon.
\newblock Fixed-parameter tractability of multicut parameterized by the size of
  the cutset.
\newblock {\em {SIAM} J. Comput.}, 43(2):355--388, 2014.
\newblock \href {https://doi.org/10.1137/110855247}
  {\path{doi:10.1137/110855247}}.

\bibitem{McConnellS99}
Ross~M. McConnell and Jeremy~P. Spinrad.
\newblock Modular decomposition and transitive orientation.
\newblock {\em Discrete Mathematics}, 201(1-3):189--241, 1999.
\newblock \href {https://doi.org/10.1016/S0012-365X(98)00319-7}
  {\path{doi:10.1016/S0012-365X(98)00319-7}}.

\bibitem{Niedermeier06}
Rolf Niedermeier.
\newblock {\em Invitation to Fixed-Parameter Algorithms}.
\newblock Oxford University Press, 2006.
\newblock \href {https://doi.org/10.1093/ACPROF:OSO/9780198566076.001.0001}
  {\path{doi:10.1093/ACPROF:OSO/9780198566076.001.0001}}.

\bibitem{Schroder19}
Johannes C.~B. Schr{\"o}der.
\newblock Comparing graph parameters.
\newblock Bachelor thesis, Technische Universit\"at Berlin, 2019.
\newblock URL:
  \url{https://fpt.akt.tu-berlin.de/publications/theses/BA-Schr\%C3\%B6der.pdf}.

\bibitem{SorgeW19}
Manuel Sorge and Mathias Weller.
\newblock The graph parameter hierarchy, 2019.
\newblock URL: \url{https://manyu.pro/assets/parameter-hierarchy.pdf}.

\bibitem{Tale25arXiv}
Prafullkumar Tale.
\newblock Geodetic set on graphs of constant pathwidth and feedback vertex set
  number.
\newblock {\em CoRR}, abs/2504.17862, 2025.
\newblock \href {http://arxiv.org/abs/2504.17862} {\path{arXiv:2504.17862}}.

\bibitem{Tran22}
Duc~Long Tran.
\newblock Expanding the graph parameter hierarchy.
\newblock Bachelor thesis, Technische Universit\"at Berlin, 2022.
\newblock URL:
  \url{https://fpt.akt.tu-berlin.de/publications/theses/BA-Duc-Long-Tran.pdf}.

\end{thebibliography}

\end{document}